\newtheorem{theorem}{{Theorem}}
\newtheorem{lemma}[theorem]{{Lemma}}
\newtheorem{corollary}[theorem]{{Corollary}}
\newtheorem{definition}{{Definition}}
\newtheorem{remark}{{\textbf{Remark}}}
\newcommand{\deff}{\mbox{$\stackrel{\rm def}{=}$}}
\def\BibTeX{{\rm B\kern-.05em{\sc i\kern-.025em b}\kern-.08em
    T\kern-.1667em\lower.7ex\hbox{E}\kern-.125emX}}
\begin{document}
\title{Analog Multi-Party Computing: Locally Differential Private Protocols for Collaborative Computations}

\author{Hsuan-Po Liu, Mahdi Soleymani, and Hessam Mahdavifar
\thanks{This paper was presented in part at the IEEE International Symposium on Information Theory (ISIT), Taipei, Taiwan, Jun 2023 \cite{dpcc}. This work was supported in part by Cisco Research under grant 70619503.

Hsuan-Po Liu is with the Department of Electrical Engineering and Computer Science at the University
of Michigan, Ann Arbor, MI 48109, USA (e-mail: hsuanpo@umich.edu). 
Mahdi Soleymani is with the Halıcıo\u{g}lu Data Science Institute at the University of California San Diego, La Jolla, CA 92093, USA (e-mail: msoleymani@ucsd.edu). 
Hessam Mahdavifar is with the Department of Electrical and Computer Engineering at Northeastern University, Boston, MA 02115, and also with the Department of Electrical Engineering and Computer Science at the University
of Michigan, Ann Arbor, MI 48109, USA (e-mail: h.mahdavifar@northeastern.edu).
}
}


\markboth{}%
{How to Use the IEEEtran \LaTeX \ Templates}

\maketitle

\begin{abstract}
We consider a fully-decentralized scenario in which no central trusted entity exists and all clients are \emph{honest-but-curious}. The state-of-the-art approaches to this problem often rely on cryptographic protocols, such as multiparty computation (MPC), that require mapping real-valued data to a discrete alphabet, specifically a finite field. These approaches, however, can result in substantial accuracy losses due to computation overflows. To address this issue, we propose \texttt{A-MPC}, a private analog MPC protocol that performs all computations in the analog domain. We characterize the privacy of individual datasets in terms of $(\epsilon, \delta)$-local differential privacy, where the privacy of a single record in each client's dataset is guaranteed against other participants. In particular, we characterize the required noise variance in the Gaussian mechanism in terms of the required $(\epsilon,\delta)$-local differential privacy parameters by solving an optimization problem. Furthermore, compared with existing decentralized protocols, \texttt{A-MPC} keeps the privacy of individual datasets against the collusion of all other participants, thereby, in a notably significant improvement, increasing the maximum number of colluding clients tolerated in the protocol by a factor of three compared with the state-of-the-art collaborative learning protocols.
Our experiments illustrate that the accuracy of the proposed $(\epsilon,\delta)$-locally differential private logistic regression and linear regression models trained in a fully-decentralized fashion using \texttt{A-MPC} closely follows that of a centralized one performed by a single trusted entity. 
\end{abstract}

\begin{IEEEkeywords}
Multiparty computation, secret sharing, differential privacy.
\end{IEEEkeywords}

\section{Introduction}
\IEEEPARstart{W}{\lowercase{ith}} the intensive increase in demand for distributed computing and learning models over distributed datasets \cite{distml1}, several distributed learning schemes have been proposed that incorporate datasets dispersed among several entities/servers into training models \cite{CodedPrivateML,copml,distml2,distml3,distml4,pmlr-v139-li21e}.
The servers collaborate to jointly train a machine learning model over their individual datasets. One of the major concerns in such distributed systems is to preserve the privacy of the datasets while collaboratively training a model among the servers. Specifically, some categories of datasets may be highly sensitive, e.g., personal medical records, and  \emph{almost} no information should be revealed about the individual records. This motivates proposing multi-party computing (MPC) protocols to keep the privacy of individual records while allowing multiple data owners to collaboratively train a machine learning model without revealing their datasets.

The seminal Shamir’s secret sharing scheme and its various versions are often used to provide information-theoretic security for data, referred to as a secret, while distributing it among a set of servers/users \cite{shamir}. Also, Shamir’s scheme serves as the backbone of most of the existing schemes on privacy-preserving MPC, such as the celebrated BGW scheme \cite{bgw}.
In Shamir’s scheme, the secret/data symbols are always assumed to be elements of a finite field. Consequently, the state-of-the-art schemes treat the data symbols in the given dataset as finite field elements in order to employ Shamir’s secret sharing, see, e.g., \cite{bgw}. However, mapping the data into a finite field can result in substantial accuracy losses, mainly due to computation overflows. Such methods have been considered in several recent works, see, e.g., \cite{CodedPrivateML,copml}, where protocols have been proposed for privately training a logistic regression model in a distributed fashion. 

Recently, a framework has been proposed in \cite{analogss} to construct the counterpart of Shamir’s secret sharing scheme in the analog domain. This framework is then utilized to construct privacy-preserving distributed computation and learning protocols over real/complex datasets. In other words, all the operations, including encoding the data symbols to be distributed among the computational servers and recovery of the final outcome from the collected results returned by the servers are carried out over the infinite fields of $\mathbb{R/C}$. 
The result in \cite[Fig.~4]{analogss} shows that the protocol computed in the analog domain is robust with respect to the size of the training dataset while the fixed-point implementations, i.e., all data symbols are assumed to be elements of a finite field, suffer significantly from wrap-around error as the size of dataset passes a certain threshold depending on the size of the  
underlying finite field.
However, in this approach, the secret cannot be perfectly secured in an information-theoretic sense. In analog domain computations, the information-theoretic measure of security is no longer perfect compared to Shamir’s secret sharing scheme over finite fields. Thus, certain privacy metrics should be analyzed for the protocols in the analog domain to ensure data privacy.
In \cite{dpcc}, a coded computing scheme in the centralized setting has been proposed which considers secret sharing in the analog domain \cite{analogss}. The privacy guarantee in \cite{dpcc} is analyzed by differential privacy \cite{dwork2014algorithmic,10.1145/1866739.1866758,dwork2006our,dwork2006calibrating}.
The protocols in \cite{analogss,dpcc} require a trusted master node to encode data and distribute them among the servers. 

In this paper, we study fully decentralized MPC (i.e., no trusted master node) over real-valued data, guaranteeing local differential privacy [16], [17] for all parties (clients) during the stage that requires data sharing with others. In particular, we consider a distributed setting, where $N$ clients hold their private inputs respectively and engage in a protocol to compute a function of their joint inputs.   One of the distinctions of this paper compared to the previous works considering $(\epsilon,\delta)$-locally differential private MPC schemes in the literature is that we provide a thorough analysis for matrix computations in the domain of $\mathbb{R/C}$, while others establish their analysis for binary datasets. Furthermore, our scheme keeps the privacy of data against a colluding subset of size up to $T$ of clients, which can go up to the maximum possible for the number of adversaries, i.e., $T=N-1$. This is done assuming that all clients are honest-but-curious, i.e., all clients strictly follow the protocol, but they may aggregate their shares to infer information about the data of other clients outside the collusion. 
More specifically, we provide methods for carrying out the required building blocks for computation in the real/complex domain in a privacy-preserving manner. This includes \emph{addition}, \emph{multiply-by-a-constant}, and \emph{multiplication}, where the \emph{multiplication} computation requires two phases of computations, i.e., \emph{offline phase} and \emph{online phase}. In order to carry out multiplication between secret shares, we propose a scheme that leverages \emph{analog multiplication triples} which are the analog counterpart for the Beaver triple \cite{beaver}. The proposed scheme enables us to compute the multiplication between secret shares by performing linear operations without directly multiplying two shares which increases the degree of the polynomial interpolated at the decoder.
Moreover, in theory, if all the computations are done over the real/complex numbers with infinite precision, our protocol can compute the result accurately without error. However, in practice, data is represented by a finite resolution of bits, either as fixed-point or floating-point. Thus, we provide a bound for the perturbation. 

It is worth mentioning that \cite{copml} considers a similar scenario that enables clients to train a logistic regression model collaboratively while no information about the individual datasets or the intermediate model parameters is revealed in an information-theoretic sense. However, guaranteeing perfect privacy imposes a strict upper bound on the maximum number of colluding clients $T$.
In comparison, our protocol tolerates a group of colluding clients up to size $T=N-1$, while keeping the individual datasets and the intermediate model parameters locally differential private. 

We propose two collaborative machine learning algorithms based on the proposed \texttt{A-MPC} satisfying local differential privacy guarantee, which are $(\epsilon,\delta)$-locally differential private logistic regression model for binary classification and $(\epsilon,\delta)$-locally differential private linear regression model. 
The proposed $(\epsilon,\delta)$-locally differential private algorithms are subsequently applied to real datasets to showcase the performance of the proposed \texttt{A-MPC} in practice. 
Our experiments demonstrate that the accuracy of the proposed $(\epsilon,\delta)$-locally differential private algorithms trained in a fully-decentralized fashion using \texttt{A-MPC} closely follows that of a centralized one performed by a single trusted entity with a negligible loss.

The rest of the paper is structured as follows.
In Section~\ref{Preliminary}, we provide some preliminaries. 
In Section~\ref{sec:protocol}, we propose \texttt{A-MPC}. 
In
Section~\ref{sec:LDP}, the privacy guarantees of \texttt{A-MPC} are characterized in terms of local differential privacy measures.
 In Section~\ref{sec:exp},  experimental results over real-world datasets are provided.
 Finally, we conclude the paper in Section~\ref{sec:CCS}. 

\section{Preliminaries and Problem Formulation}
\label{Preliminary}
In this section, we briefly overview secure MPC and secret sharing protocols along with the definitions for local differential privacy that are used later. Also, we formally define the setting considered in this paper.
\subsection{Secure MPC}
\textcolor{black}{Secure MPC \cite{bgw,yao,gmw,smpc4} allows a group of clients to jointly compute public functions on their private inputs assuming that some clients may collude to deduce some information about the private input of other clients. An MPC protocol is considered perfectly secure if the clients can learn only the final computation result while inferring no other information regarding the inputs. Several cryptographic techniques exist for secure MPC, including secret sharing \cite{shamir,bgw}, garbled circuits \cite{gc}, homomorphic encryption \cite{he1,he2,he3,he4}, and oblivious transfer \cite{ot1,ot2}. Such tools are leveraged as building blocks in conventional MPC schemes to prevent the leakage of information about the clients' private inputs. These techniques differ from our approach in various aspects, such as the proportion of corrupted/adversarial clients tolerated 
and/or whether input data belongs to a binary or real field. 
\color{black}The secure MPC can be applied to practical distributed computation tasks, one intensively studied task focused on privacy-preserving machine learning  \cite{secureml,Chameleon,ABY3,securenn,falcon,FLASH}.
However, each protocol is designed specifically for a fixed number of clients, usually no more than four.
In the work, we aim to extend the number of clients in the protocol to an arbitrary number. \color{black} Furthermore, existing prior works on this topic are based on finite field computations with a fixed number of clients.
In this work, we consider secure MPC based on secret sharing in the analog domain as the building block of our proposed protocols. 
Recently, several analog distributed computing protocols have been proposed that aim at recovering an approximation to the computation outcome \cite{ jahani2022berrut, soleymani2022approxifer, jeong2021, jahani2021codedsketch} and/or providing privacy guarantees without mapping real-valued data to the elements of a finite field \cite{tjell2021privacy, makkonen2022secure, soleymani2021analog, makkonen2022analog}. However, all aforementioned schemes require a centralized trusted entity, referred to as \emph{master/fusion} node. One major distinction of the proposed protocols in this work is that they are fully decentralized, i.e., no trusted master node is required to carry out the computations privately.  }

In general, there are two main threat models considered in the secure MPC literature. The \emph{semi-honest adversary} model considers the case where adversarial clients follow the computation protocol, but might collude to infer the secret (i.e., the data of other users) by aggregating their shares.  
In another threat model, there are possibly \emph{malicious adversaries} who may decide not to follow the steps in the protocol in order to corrupt the outcome by sharing incorrect information throughout the protocol. In this paper, we consider the semi-honest adversary threat model. 
The semi-honest adversary is also known as \emph{passive} adversary since they cannot take any actions other than collecting all the information they gathered.

\subsection{Secret Sharing}
Shamir's secret sharing \cite{shamir} is a fundamental building block for various MPC protocols. In this scheme, a secret/dataset is encoded into $N$ secret shares, where $N$ is the number of clients, and then each share is given to one of the clients. 
\textcolor{black}{Secret sharing has been widely employed in cloud-based scenarios to improve the security of sensitive data for clients \cite{cloud1,cloud2,cloud3,cloud4}.}
In the original Shamir’s scheme, the secret/data symbols as well as the operation involving them are done over a finite field. Hence, a common approach for deploying it in practical settings involving real-valued datasets is through quantization and mapping to finite fields which could cause accuracy loss due to quantization errors and computation overflows. 
Thus, to address these critical issues, \cite{analogss} proposed an analog secret sharing scheme that can be directly applied to real-valued data. This approach is also utilized for a distributed computing protocol, where a master node offloads a computational job to a set of workers/clients. To this end, the master encodes a real-valued dataset via analog secret sharing before sharing it with the clients in order to provide privacy guarantees against any set of colluding parties up to a specific size $T$. 
In this paper, we extend such protocols by incorporating significant improvements across various crucial aspects in order to arrive at fully decentralized and privacy-preserving protocols that work over real-valued datasets in a scalable fashion.

\subsection{Differential Privacy}


Differential privacy \cite{dwork2014algorithmic,10.1145/1866739.1866758,dwork2006our,dwork2006calibrating} has received considerable attention as a formal mathematical notion of privacy that provides protection against strong adversaries. To protect the single individual’s dataset, local differential privacy has been discussed \cite{Duchi,hitter}.
The MPC schemes considering (local) differential privacy have been studied in \cite{dphow, mpcdpclassifiers,smpcdp,dperm,ldpevolvingdata}. In particular, \cite {dphow} characterizes the definition of differential privacy for a setup with semi-honest parties. 
 In \cite {mpcdpclassifiers}, authors employ asymmetric key additive homomorphic encryption to compose a perturbed aggregate classifier satisfying differential privacy from classifiers locally trained by multiple untrusted parties. Also, \cite {smpcdp} considers MPC under differential privacy, where each party possesses a single bit of information and the bits are independent. In its proposed algorithm therein, each party broadcasts a randomized version of its bit with certain probabilities by adding random noises to data. 
In \cite{dperm}, output perturbation and gradient perturbation are proposed in a distributed learning setting to ensure the privacy of data while incorporating distributed datasets to train a global model. Each iteration of the training process requires computing securely by transforming the data symbol into the discrete domain, then adding noises to the computation results in order to satisfy the differential privacy requirement. Moreover, \cite{ldpevolvingdata} introduces a locally differential private technique for collecting statistical information from users by utilizing a randomized response scheme. However, all existing works merely consider the case where all parties hold a single bit or a scalar value and all the computations are performed over data after they have been mapped to a finite field. Next, we provide a formal definition of local differential privacy and the framework for analyzing the privacy loss.

\subsubsection{Local Differential Privacy}
The main idea to achieve differential privacy is through perturbation by introducing random noises generated according to a chosen distribution. Local differential privacy considers algorithms to keep each individual user's dataset private. We formally define the notion of $(\epsilon,\delta)$-local differential privacy in the following
\begin{definition}[$(\epsilon,\delta)$-local differential privacy]
\label{def:LDP}
Let $d$ and $d^\prime$ be two neighboring datasets, where $d,d^\prime\in \mathcal{D}$, in an individual client that only differs by a single record, i.e., $\mathrm{dist}(d,d^\prime)=1$. The neighboring datasets $d$ and $d^\prime$ satisfy $(\epsilon,\delta)$-local differential privacy for any $\epsilon>0$ and $\delta \in [0,1]$ under a randomized mechanism $\mathcal{M}$ that under any event $\mathcal{E} \subseteq \mathrm{Range}(\mathcal{M})$,
\begin{equation}
\label{eq:LDP}
\mathbb{P}[\mathcal{M}(d)\in \mathcal{E}]\leq e^\epsilon\cdot\mathbb{P}[\mathcal{M}(d^\prime)\in \mathcal{E}]+\delta,
\end{equation}
where $\delta$ represents the failure probability.
\end{definition}
The sensitivity for a query function $f(\cdot)$ is the largest difference between the actual and the perturbed output.
\begin{definition}[$l_2$ local sensitivity]
\label{eq:ls}
For two neighboring datasets $d$ and $d^\prime$ in an individual client together with a query function $f:\mathcal{D}\to\mathbb{R}$, the $l_2$ local sensitivity is defined as follows:
\begin{equation}
\Delta \overset{\mathrm{def}}{=} \max_{\mathrm{dist}(d,d^\prime)=1}||f(d)-f(d^\prime)||_2.
\end{equation}
\end{definition}
The Gaussian mechanism is also defined as follows.
\begin{definition}[Gaussian mechanism]
\label{eq:gm}
Consider a query function $f$ to be applied on a dataset $d$. Then the Gaussian mechanism $\mathcal{M}$ is defined as 
\begin{equation*}
\mathcal{M}(d) \overset{\mathrm{def}}{=} f(d)+\mathcal{N}(0,\sigma^2),
\end{equation*}
which adds random noise to the query result according to a zero-mean Gaussian distribution with variance  $\sigma^2$.  
\end{definition}
\subsubsection{Analysis of Privacy Loss}
Consider two neighboring datasets $d$ and $d^\prime$ and a query function $f$. 
For a randomized mechanism $\mathcal{M}$, the probability density function (PDF) corresponding to the datasets $d$ and $d'$ are denoted as  $p_{\mathcal{M}(d)}(y)$ and $p_{\mathcal{M}(d^\prime)}(y)$, respectively. 
Let
\begin{equation}
\label{eq:plf}
l_{\mathcal{M},d,d^\prime}(y)\overset{\mathrm{def}}{=}\mathrm{ln}[\frac{p_{\mathcal{M}(d)}(y)}{p_{\mathcal{M}(d^\prime)}(y)}],
\end{equation}
which is referred to as the privacy loss function, and, 
$$L_{\mathcal{M},d,d^\prime}=l_{\mathcal{M},d,d^\prime}(Y),$$
that is referred to as the privacy loss random variable.
Then, the $(\epsilon,\delta)$-local differential privacy implies
\begin{equation}
\label{eq:plrv}
\mathbb{P}[L_{\mathcal{M},d,d^\prime}\leq\epsilon]\geq 1-\delta.
\end{equation}
Consider the case where $f(d)=0$ and $f(d^\prime)=\Delta$, then we have
\begin{equation}
p_{\mathcal{M}(d)}(y)=\frac{1}{\sqrt{2\pi\sigma^2}}e^{-\frac{y^2}{2\sigma^2}},
\end{equation}
and
\begin{equation}
p_{\mathcal{M}(d^\prime)}(y)=\frac{1}{\sqrt{2\pi\sigma^2}}e^{-\frac{(y-\Delta)^2}{2\sigma^2}}.
\end{equation}
Note that although the ratio of probabilities is always positive, the result after taking the logarithm may become negative. Thus, typically, the absolute value of the privacy loss function is considered as
\begin{equation}
\label{eq:plf_gaussian}
|l_{\mathcal{M},d,d^\prime}(y)|
=|\textstyle\frac{1}{2\sigma^2}(2y\Delta-\Delta^2)|.
\end{equation}

Note that by the definition of $(\epsilon,\delta)$-local differential privacy in \eqref{eq:LDP}, one needs 
\begin{equation}
    |\frac{1}{2\sigma^2}(2y\Delta-\Delta^2)|<\epsilon
\end{equation}
with probability at least $1-\delta$ to guarantee $(\epsilon,\delta)$-local differential privacy.

\subsection{Problem Setting}\label{setting}
\begin{figure}[t]
\centering
{\includegraphics[width=.45\textwidth]{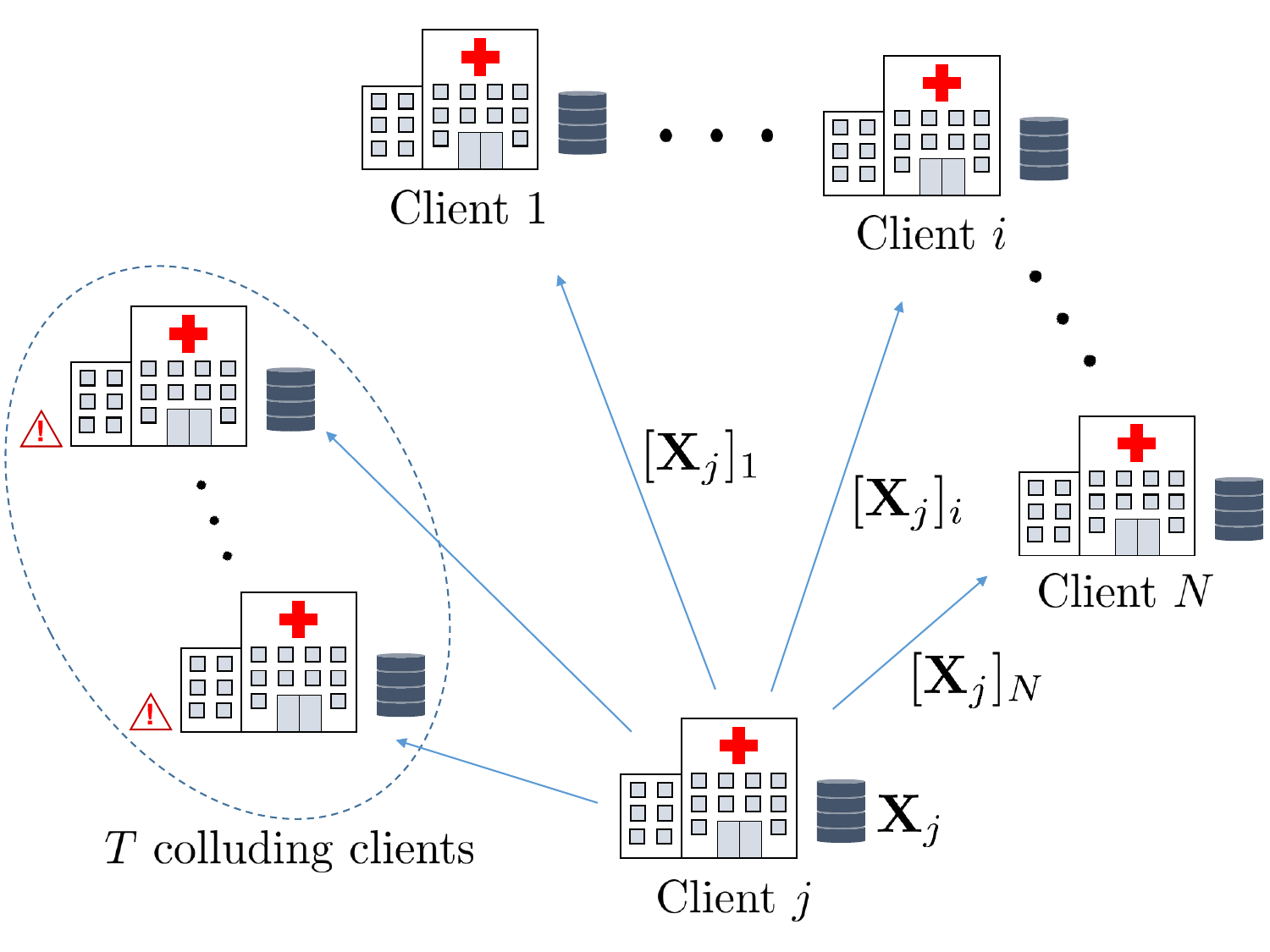}}\\
  \caption{Problem setting}\label{fig:problem_setting}
\end{figure}
In this paper, we consider a decentralized setting where $N$ clients each maintain private datasets,  aiming to execute a predefined protocol for computing a function over the collection of their datasets while ensuring privacy. The setting is as shown in Fig.~\ref{fig:problem_setting}. Let the dataset at the $j$-th client be denoted by $\mathbf{X}_j$, for $j\in[N]$, also referred to as its \emph{secret}. We denote the secret held by client $j$ as $\mathbf{X}_j$ that is shared to the client $i$ as $[\mathbf{X}_j]_i$, for $i,j\in[N]$. The protocol is executed in a synchronous environment with point-to-point secure communication channels between clients and is required to preserve privacy against the collusion of up to $T$ clients, for some $T < N$. All clients are honest-but-curious, which means they strictly follow the protocol. However, colluding clients attempt to infer information about the inputs of the other clients. We require that the local differential privacy be preserved for any collusion of size up to $T$. However, collusion of $T+1$ or more clients may fully reveal the secrets. Note that for the secret sharing stage, the clients simply add noises to their datasets. We solve an optimization problem for choosing the optimal noise parameter to satisfy the $(\epsilon,\delta)$-local differential privacy.

\section{The Proposed \texttt{A-MPC}}
\label{sec:protocol}
In this section, we present \texttt{A-MPC} protocol, a fully decentralized MPC protocol that guarantees the privacy of data under local differential privacy. First, the analog secret sharing for the protocol is illustrated. Then, we discuss linear computations including \emph{addition} and \emph{multiply-by-a-constant}. {Furthermore, a protocol that carries out multiplication between two secret shares in the analog domain is proposed that entails linear computations.}
\subsection{Secret Sharing in \texttt{A-MPC}}
Consider a decentralized system with $N$ clients. Now, we characterize the secret share from client $j$ to client $i$, as $[\mathbf{X}_j]_i$, for $i,j\in[N]$.
Client $j$ randomly generates a polynomial $\mathcal{S}_j(s)$ for sharing the secret $\mathbf{X}_j\in\mathbb{R}^{m\times n}$ with client $i$,
the polynomial follows $\mathcal{S}_j(0)=\mathbf{X}_j$ and $\mathrm{deg}(\mathcal{S}_j(s))=T$, such that
\begin{equation}\label{poly_def}
\mathcal{S}_j(s)=\mathbf{X}_j+\sum_{k=1}^T s^k \mathbf{N}_{j,k},
\end{equation}
where all entries in $\mathbf{N}_{j,k}$'s are noises sampled from \emph{i.i.d.} Gaussian distribution $\mathcal{N}(0,\sigma_s^2)$ by client $j$, for $i,j\in [N]$.
We resample $\sum_{k=1}^T s^k \mathbf{N}_{j,k}$ by randomly generating $\mathbf{N}_{j,k}$'s, until all entries in both $\mathrm{Real}(\sum_{k=1}^T s^k \mathbf{N}_{j,k})$, as the real part of $\sum_{k=1}^T s^k \mathbf{N}_{j,k}$, and $\mathrm{Imag}(\sum_{k=1}^T s^k \mathbf{N}_{j,k})$, as the imaginary part of $\sum_{k=1}^T s^k \mathbf{N}_{j,k}$, are within the range of $[-t,t]$, for $t\in\mathbb{R}^+$ and $i,j\in[N]$, for practical use.

The share sent from client $j$ to client $i$ is the evaluation of the polynomial defined in \eqref{poly_def} at the evaluation point $\omega_i=\exp(\sqrt{-1}\frac{2\pi i}{N})$, i.e., $\mathcal{S}_j(\omega_i)$, where $\omega_i$'s are publicly known parameters, for all $i\in[N]$.
Thus, we have  $[\mathbf{X}_j]_i=\mathcal{S}_j(\omega_i)$, for $i,j\in[N]$.
One can write the secret share sent from client $j$ to client $i$ as follows:
\begin{equation}
\label{eq:Sj}
[\mathbf{X}_j]_i=\mathcal{S}_j(\omega_i)
=  \mathbf{X}_j+\sum_{k=1}^T \omega_i^k\mathbf{N}_{j,k}=\mathbf{X}_j+\tilde{\mathbf{N}}_{ji},
\end{equation}
where $\tilde{\mathbf{N}}_{ji}=\sum_{k=1}^T \omega_i^k\mathbf{N}_{j,k}$, for $i,j\in[N]$.
We resample $\sum_{k=1}^T \omega_i^k \mathbf{N}_{j,k}$ by randomly generating $\mathbf{N}_{j,k}$'s, until all entries in both $\mathrm{Real}(\tilde{\mathbf{N}}_{ji})$, as the real part of $\tilde{\mathbf{N}}_{ji}$, and $\mathrm{Imag}(\tilde{\mathbf{N}}_{ji})$, as the imaginary part of $\tilde{\mathbf{N}}_{ji}$, are within the range of $[-t,t]$, for $t\in\mathbb{R}^+$ and $i,j\in[N]$, for practical use. The truncated Gaussian distribution with zero mean and a resampling parameter $t$ is denoted by $\mathcal{TN}(0,\sigma^2;[-t,t])$ and its PDF is given as
\begin{equation}
\label{eq:trunpdf}
\begin{aligned}
p_{\tilde{{N}}}(y)
=\frac{\phi(y)}{2\Phi(\frac{t}{\sigma})-1}\cdot\mathbb{I}_{[-t,t]}(y),
\end{aligned}
\end{equation}
where $\phi(y)$ is the PDF for $\mathcal{N}(0,\sigma_s^2)$, $\Phi(\cdot)$ is the cumulative density function for the standard normal distribution, and $\mathbb{I}_{[-t,t]}(y)$ is an indicator function such that $\mathbb{I}_{[-t,t]}(y)=1$ for $y \in [-t,t]$, and $\mathbb{I}_{[-t,t]}(y)=0$ otherwise. 

In the following lemma, we show that the combined noises of all entries in $\tilde{\mathbf{N}}_{ji}$ are distributed according to $\mathcal{TN}(0,\sigma^2;[-t,t])$, for $i,j\in[N]$. 
\begin{lemma}
The distribution of all entries in $\tilde{\mathbf{N}}_{ji}$'s is $\mathcal{TN}(0,\sigma^2;[-t,t])$, for $i,j\in[N]$.
\end{lemma}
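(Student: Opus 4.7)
The plan is to reduce the claim to a statement about a single scalar entry of $\tilde{\mathbf{N}}_{ji}$, since by independence of the entries of the matrices $\mathbf{N}_{j,k}$, distinct entries of $\tilde{\mathbf{N}}_{ji}$ are built from disjoint sets of independent Gaussians and thus it suffices to analyze one coordinate. So I would fix $i,j$ and consider a single scalar $\tilde N = \sum_{k=1}^{T} \omega_i^k N_k$, where $N_1,\dots,N_T$ are the corresponding scalar entries of $\mathbf{N}_{j,1},\dots,\mathbf{N}_{j,T}$, i.i.d.\ $\mathcal{N}(0,\sigma_s^2)$, and $\omega_i^k = \cos(2\pi i k/N) + \sqrt{-1}\sin(2\pi i k/N)$.

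Next I would split into real and imaginary parts, writing $\mathrm{Real}(\tilde N) = \sum_{k=1}^{T} \cos(2\pi i k/N)\, N_k$ and $\mathrm{Imag}(\tilde N) = \sum_{k=1}^{T} \sin(2\pi i k/N)\, N_k$. Each is a real linear combination of independent $\mathcal{N}(0,\sigma_s^2)$ random variables, hence itself zero-mean Gaussian, with the respective variance being $\sigma_s^2 \sum_k \cos^2(2\pi i k/N)$ and $\sigma_s^2 \sum_k \sin^2(2\pi i k/N)$. I would identify $\sigma^2$ in the lemma statement with this common marginal variance and invoke stability of Gaussians under real linear combinations as the one fact needed here.

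Finally, I would argue that the resampling rule described just above the lemma is exactly rejection sampling of the resulting Gaussian conditioned on the event that the real and imaginary parts lie in $[-t,t]$. By the standard rejection-sampling identity, the accepted samples have density proportional to the original Gaussian density restricted to $[-t,t]$, which is precisely \eqref{eq:trunpdf}, i.e.\ $\mathcal{TN}(0,\sigma^2;[-t,t])$. Combining with Steps~1--2 gives the claim.

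The main obstacle I anticipate is the variance question: the real and imaginary parts have variances that are generally unequal for fixed $i$, so one must either (a) argue that only the marginal on each part is claimed, each marginally following a truncated Gaussian with its own $\sigma^2$, or (b) exploit the fact that $\omega_i$ is an $N$-th root of unity so that the relevant trigonometric sums simplify for the range of $i$ and $T$ used in the protocol. A secondary subtlety is that the resampling is performed at the level of all matrix entries jointly, but since distinct entries involve independent coordinates of $\mathbf{N}_{j,k}$ the joint rejection factorizes into independent per-entry rejections, so the per-entry marginal argument still goes through.
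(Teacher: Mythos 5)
Your route is the same one the paper takes: closure of Gaussians under linear combinations gives the pre-truncation law of each entry, and the resampling rule is rejection sampling, so the accepted samples have the restricted-and-renormalized density \eqref{eq:trunpdf}. The per-entry reduction and the factorization of the joint rejection across distinct matrix entries are both fine, and your write-up of the truncation step is more explicit than the paper's one-line ``we truncate the distribution.''

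The obstacle you flag at the end is genuine, and you should know the paper does not resolve it: its proof simply asserts that each entry of $\sum_{k}\omega_i^k\mathbf{N}_{j,k}$ is $\mathcal{N}(0,\sum_{k}|\omega_i^k|^2\sigma_s^2)=\mathcal{N}(0,T\sigma_s^2)$ and then truncates. That quantity is $\mathbb{E}[|Z|^2]$ for the complex-valued entry $Z$, not the variance of either real coordinate. Your option (b) does not rescue the stated claim: for $T=N-1$ and $2i\not\equiv 0\pmod N$ one gets
\begin{equation*}
\sum_{k=1}^{T}\cos^2\!\Big(\frac{2\pi ik}{N}\Big)=\frac{N}{2}-1,\qquad \sum_{k=1}^{T}\sin^2\!\Big(\frac{2\pi ik}{N}\Big)=\frac{N}{2},
\end{equation*}
so the two marginal variances are unequal and each is roughly $\sigma^2/2$ rather than $\sigma^2$. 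The defensible conclusion is your option (a): $\mathrm{Real}(\tilde N)$ and $\mathrm{Imag}(\tilde N)$ are each (after rejection) truncated zero-mean Gaussians with their own variances summing to $T\sigma_s^2=\sigma^2$, and the lemma as literally stated only holds if $\mathcal{TN}(0,\sigma^2;[-t,t])$ is read as a statement about the complex entry with total second moment $\sigma^2$ (which, incidentally, is not circularly symmetric, since $\sum_{k=1}^{T}\omega_i^{2k}=-1\neq 0$). So your proposal reproduces the paper's argument and correctly isolates the one step where that argument is loose; to call the proof complete you would need to either prove the per-coordinate version or restate the lemma in the total-variance sense.
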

\begin{proof}
Recall that $\lbrack \mathbf{X}_j\rbrack_i=\mathbf{X}_j+\tilde{\mathbf{N}}_{ji}$, where  $\tilde{\mathbf{N}}_{ji}=\sum_{k=1}^T \omega_i^k\mathbf{N}_{j,k}$ with all entries in $\mathbf{N}_{j,k}$'s are generated independently according to  $\mathcal{N}(0,\sigma^2_s)$, for $i,j\in[N]$. Since the sum of independent Gaussian random variables is also a Gaussian random variable, each entry in the weighted sum $\sum_{k=1}^T \omega_i^k\mathbf{N}_{j,k}$ is distributed according to $\mathcal{N}(0,\sum_{k=1}^T \lvert\omega_i^k\rvert^2\sigma_s^2)$, for $i,j\in[N]$.  
We have
\begin{equation}
\sum_{k=1}^T \lvert\omega_i^k\rvert^2\sigma_s^2=(\sum_{k=1}^T \lvert\omega_i^k\rvert^2)\cdot\sigma_s^2=\sigma^2,
\end{equation}
where $\sigma=\frac{\alpha^*\cdot\Delta}{\sqrt{2\epsilon}}$, for $i\in[N]$.
Note that we truncate the distribution to $[-t,t]$, which results in the distribution $\mathcal{TN}(0,\sigma^2;[-t,t])$, for $i\in[N]$.
\end{proof}
\subsection{The proposed analog computations}
We consider a protocol in the presence of a semi-honest adversary who controls at most $T=N-1$ clients that may collude. Note that this corresponds to the maximum value for $T$ since  $T=N$ corresponds to the trivial case where all individual datasets could be recovered using polynomial interpolation. We show that our protocols preserve the privacy of the individual datasets and yet are capable of recovering the computation outcome in such an extreme worst-case scenario. 
Each client posses an input and output dataset, i.e., $(\mathbf{X}_1,\dots,\mathbf{X}_N)\to(\mathbf{Y}_1,\dots,\mathbf{Y}_N)$, where $\mathbf{Y}_j$ is the desired computation result for the $j$-th client,  for $j\in[N]$.
In order to carry out any polynomial computation over the secret shares at the clients, we need to know how to do the following basic operations: \emph{addition}, \emph{multiply-by-a-constant}, and \emph{multiplication} computations.
 in detail in the following. 
 \subsubsection{Addition computation}
We assume an analog addition for two matrices, $\mathbf{U}_1$ and $\mathbf{U}_2$, where $\mathbf{U}_1,\mathbf{U}_2\in\mathbb{R}^{m \times n}$. In the \texttt{A-MPC} protocol, the $i$-th client holds secret shares for $\mathbf{U}_1$ and $\mathbf{U}_2$ as $[\mathbf{U}_1]_i$ and $[\mathbf{U}_2]_i$, respectively, for $i \in [N]$. The secret shares are
\begin{equation}
\label{eq:addition}
\begin{aligned}
[\mathbf{U}_j]_i
=  \mathbf{U}_j+\sum_{k=1}^T \omega_i^k\mathbf{N}_{j,k}=\mathbf{U}_j+\tilde{\mathbf{N}}_{ji},
\end{aligned}
\end{equation}
where $i\in[N]$ and $j\in\{1,2\}$. Their addition is
\begin{equation}
\label{eq:addition}
[\mathbf{U}_1]_i+[\mathbf{U}_2]_i=\mathbf{U}_1+\mathbf{U}_2+\sum_{k=1}^T \omega_i^k(\mathbf{N}_{1,k}+\mathbf{N}_{2,k}),
\end{equation}
for $i\in[N]$. Thus, each client locally adds its own shares $[\mathbf{U}_1]_i$ and $[\mathbf{U}_2]_i$ together, which results in valid shares of the sum of the inputs held at the clients, as required. One can see that the constant term in \eqref{eq:addition} remains $\mathbf{U}_1+\mathbf{U}_2$.
\subsubsection{Multiply-by-a-constant computation}
Given a matrix $\mathbf{U}\in\mathbb{R}^{m \times n}$, we consider the multiplication of a constant $l$ and $\mathbf{U}$. In the \texttt{A-MPC} protocol, the $i$-th client holds secret share for $\mathbf{U}$ as $[\mathbf{U}]_i$, for $i \in [N]$. Multiplying $[\mathbf{U}]_i$ by the constant $l$ yields:
\begin{equation}
\label{eq:multiplyconst}
\begin{aligned}
l\cdot[\mathbf{U}]_i=l\cdot\mathbf{U}+l\cdot\sum_{k=1}^T \omega_i^k\mathbf{N}_{k}=l\cdot\mathbf{U}+\sum_{k=1}^T \omega_i^k(l\cdot\mathbf{N}_{k}),
\end{aligned}
\end{equation}
for $i\in[N]$.
Thus, each client simply locally multiplies its share $[\mathbf{U}]_i$ with the constant $l$, which results in a valid share of the multiplication held at the clients, as required, for $i\in[N]$. The constant term in \eqref{eq:multiplyconst} remains $l\cdot\mathbf{U}$. 

\subsubsection{Multiplication computation}
The multiplication triple is an efficient method of performing multiplications between secret shares. \textcolor{black}{It is known to reduce communication overhead between the clients by dividing the process into an \emph{offline phase} and an \emph{online phase} as in \cite{beaver}, where the offline phase is an input-independent setup phase that greatly reduces the overhead of the online phase where clients provide their inputs.} We assume an analog multiplication for two matrices, $\mathbf{U}$ and $\mathbf{V}$, \color{black}  where $\mathbf{U}\in \mathbb{R}^{m_1 \times n_1}$ and $\mathbf{V}\in \mathbb{R}^{m_2 \times n_2}$. \color{black}Note that we must have $n_1=m_2$ for the matrix multiplication. In an analog MPC protocol, the $i$-th client holds secret shares for $\mathbf{U}$ and $\mathbf{V}$ as $[\mathbf{U}]_i$ and $[\mathbf{V}]_i$, for $i \in [N]$.
\begin{itemize}
\item \emph{Offline phase}: In the offline phase, the protocol first generates random triplets $\mathbf{A}, \mathbf{B}, \mathbf{C}$ such that $\mathbf{AB}=\mathbf{C}$ where $\mathbf{A}\in \mathbb{R}^{m_1\times n_1}$, $\mathbf{B}\in \mathbb{R}^{m_2\times n_2}$, and $\mathbf{C}\in \mathbb{R}^{m_1\times n_2}$, where  $n_1=m_2$.
All entries in both $\mathbf{A}$ and $\mathbf{B}$ follow the distribution $\mathcal{TN}(0,\sigma^2;[-t,t])$.
Then, the triplets $\mathbf{A}$, $\mathbf{B}$, and $\mathbf{C}$ are secretly shared with all clients by utilizing the analog secret sharing scheme, in such a way that the $i$-th client holds secret shares from each $\mathbf{A}$, $\mathbf{B}$, and $\mathbf{C}$, denoted by $[\mathbf{A}]_i$, $[\mathbf{B}]_i$, and $[\mathbf{C}]_i$, respectively, for $i \in [N]$. Note that the offline phase procedure can be conducted independently of input data and, as such, can be completed before the subsequent online phase, which will be described next.  

\item \emph{Online phase}: At the beginning of this phase, the protocol receives two input data as matrices $\mathbf{U}$ and $\mathbf{V}$. The matrices are then secretly shared with all clients by invoking the analog secret sharing scheme, thus the $i$-th client holds secret shares $[\mathbf{U}]_i$ and $[\mathbf{V}]_i$, for $i\in [N]$. Together with the secret shares from the former phase, the $i$-th client holds secret shares $[\mathbf{A}]_i$, $[\mathbf{B}]_i$, $[\mathbf{C}]_i$, $[\mathbf{U}]_i$, and $[\mathbf{V}]_i$, for $i\in[N]$.

Next, client $i$ computes
$[\mathbf{D}]_i$ and $[\mathbf{E}]_i$ as $[\mathbf{D}]_i=[\mathbf{U}]_i-[\mathbf{A}]_i$ and $[\mathbf{E}]_i=[\mathbf{V}]_i-[\mathbf{B}]_i$, where $\mathbf{D} \in \mathbb{R}^{m_1\times n_1}$ and $\mathbf{E} \in \mathbb{R}^{m_2\times n_2}$, respectively, for for all $i\in[N]$. 
The parties then proceed by collaboratively recovering  $\mathbf{D}$ and $\mathbf{E}$ using the shares $[\mathbf{D}]_i$ and $[\mathbf{E}]_i$. Note that the decoding procedure for $\mathbf{D}$ and $\mathbf{E}$ involves interpolating polynomials that pass through $[\mathbf{D}]_i$ and $[\mathbf{E}]_i$, respectively. This procedure only requires the previously discussed \emph{addition} and \emph{multiply-by-a-constant} subroutines as the evaluation points $\omega_i$'s are public.
Client $i$ then computes its secret share of the multiplication, denoted by $[\mathbf{UV]}_i$, as follows:
\begin{equation}
\label{eq:triples}
[\mathbf{UV]}_i=\mathbf{D}[\mathbf{B}]_i+[\mathbf{A}]_i\mathbf{E}+\mathbf{DE}+[\mathbf{C}]_i,
\end{equation} 
for all $i\in[N]$.

Note that \eqref{eq:triples}  can be implemented by employing both \emph{addition} and \emph{multiply-by-a-constant} computations.
In order to decode the desired computation result $\mathbf{UV}$, each client collects at least $T+1$ secret shares from $\{[\mathbf{UV]}_i\}_{i\in[N]}$. 
Then, the multiplication of the input matrices is recovered by computing $\mathbf{D}\mathbf{B}+\mathbf{A}\mathbf{E}+\mathbf{DE}+\mathbf{C}$. By substituting $\mathbf{D}=\mathbf{U}-\mathbf{A}$ and $\mathbf{E}=\mathbf{V}-\mathbf{B}$, one can write 
\begin{equation*}
\label{mt}
\begin{aligned}
&\mathbf{D}\mathbf{B}+\mathbf{A}\mathbf{E}+\mathbf{DE}+\mathbf{C}
\\&=(\mathbf{U}-\mathbf{A})\mathbf{B}+(\mathbf{V}-\mathbf{B})\mathbf{E}+(\mathbf{U}-\mathbf{A})(\mathbf{V}-\mathbf{B})+\mathbf{C}
\\&=\mathbf{UV}.
\end{aligned}
\end{equation*}
At the end of this step, client $i$ successfully obtain the secret share of the multiplication  for $\mathbf{U}$ and $\mathbf{V}$ as $[\mathbf{UV}]_i$, for $i\in[N]$. 
Note that for each multiplication between a pair of secret shares, we must use a new pair of analog triplets. We summarize the multiplication computation protocol in the analog domain in Algorithm \ref{Algo:analogmul}.
\end{itemize}

\begin{algorithm}[t]
\caption{Multiplication Computation in \texttt{A-MPC}}
\label{Algo:analogmul}
\begin{algorithmic}[1]
\REQUIRE{Number of clients $N$, number of colluding clients $T$, public parameters $\omega_j$'s for $j\in[N]$.}
\renewcommand{\algorithmicrequire}{\textbf{Input:}}
\renewcommand{\algorithmicensure}{\textbf{Output:}}
\REQUIRE{ $\mathbf{U}\in\mathbb{R}^{m_1\times n_1}$, $\mathbf{V}\in\mathbb{R}^{m_2\times n_2}$.}
\ENSURE  $\{[\mathbf{UV}]_i\}_{i\in[N]}$.\\
\vspace{2mm}
\textbf{Offline phase.}
\STATE The protocol randomly generate matrices $\mathbf{A}\in\mathbb{R}^{m_1\times n_1}$, $\mathbf{B}\in\mathbb{R}^{m_2\times n_2}$, where all entries in both $\mathbf{A}$ and $\mathbf{B}$ follow the distribution $\mathcal{TN}(0,\sigma^2;[-t,t])$, and $\mathbf{C}\in\mathbb{R}^{m_1\times n_2}$ given that $\mathbf{AB=C}$. Note that $n_1=m_2$. \\
\STATE Secret shares $\mathbf{A,B,C}$ to all clients so that client $i$ holds $[\mathbf{A}]_i$,$[\mathbf{B}]_i$, and $[\mathbf{C}]_i$, for $i\in[N]$.

\vspace{2mm}
\textbf{Online phase.} 
\STATE The protocol receives the inputs $\mathbf{U}\in\mathbb{R}^{m_1\times n_1}$, $\mathbf{V}\in\mathbb{R}^{m_2\times n_2}$.
\STATE Secretly shares the inputs $\mathbf{U},\mathbf{V}$ to all clients.\\
// Client $i$ now holds $[\mathbf{U}]_i,[\mathbf{V}]_i$, also, $[\mathbf{A}]_i,[\mathbf{B}]_i,[\mathbf{C}]_i$, for $i\in[N]$.
\STATE Client $i$ computes $[\mathbf{D}]_i=[\mathbf{U}]_i-[\mathbf{A}]_i$ and $[\mathbf{E}]_i=[\mathbf{V}]_i-[\mathbf{B}]_i$, for $i\in[N]$.
\STATE Client $i$ collects at least $T+1$ shares of $[\mathbf{D}]_j$'s and $[\mathbf{E}]_j$'s from clients $j\in[N]$ to reconstruct $\mathbf{D}$ and $\mathbf{E}$, for $i\in[N]$. \\
\STATE Client $i$ computes $[\mathbf{UV}]_i=\mathbf{D}[\mathbf{B}]_i+[\mathbf{A}]_i\mathbf{E}+\mathbf{DE}+[\mathbf{C}]_i$, for $i\in[N]$.
\end{algorithmic}
\end{algorithm}

\begin{remark}
Note that in the online phase for the analog multiplication triplets, the publicly revealed parameters $\mathbf{D}$ and $\mathbf{E}$ involve the input data matrices $\mathbf{U}$ and $\mathbf{V}$. Therefore, there is some privacy leakage in this particular step that needs to be carefully characterized. 
Since we randomly generate the entries of $\mathbf{A}$ and $\mathbf{B}$ according to $\mathcal{TN}(0,\sigma^2;[-t,t])$, 
the entries in $\mathbf{D}$ and $\mathbf{E}$ are, in a sense from the privacy-preserving perspective, in the same form as the secret shares specified in \eqref{eq:Sj}. Therefore, the same privacy analysis could be recycled for the analog multiplication triplets. The specific characterization of guarantees in terms of $(\epsilon,\delta)$-local differential privacy is provided later in the next section.
\end{remark}

For the sake of simplifying the analysis in the rest of this section, we suppose that each client holds scalar-valued data, i.e., the data at the $j$-th client is denoted by $x_j$, for $j\in [N]$. The secret share sent from the $j$-th client to the $i$-th client is denoted by $[x_j]_i=x_j+\sum_{k=1}^T\omega_i^k n_{j,k}$, for $i,j\in[N]$. 
The computation result at the $j$-th client is denoted by $y_j$, for $j\in[N]$. 
The share of the result held at the $i$-th client is denoted by $[y_j]_i$, for $i,j\in[N]$. 
The same analysis can be easily extended to data in the matrix form by applying it separately to all the entries of the data matrix.

Now, suppose that all the outputs $y_i$'s for $i\in [N]$ are linear combinations of the inputs $x_j$'s for $j\in [N]$,  i.e., 
\begin{equation}
y_i = a_{i,1}x_1+\cdots+a_{i,N}x_N=\sum_{l=1}^N a_{i,l}x_l,
\end{equation}
where $a_{i,k}$'s are constants. To recover the outcome, the $i$-th client has to gather $[y_i]_j$'s from clients $j\in [N]$, where $[y_i]_j = a_{i,1}[x_1]_j+\cdots+a_{i,N}[x_N]_j=\sum_{l=1}^N a_{i,l}[x_l]_j$, for $i,j\in[N]$. Note that in order to fully recover the secrets, one requires at least $T+1$ shares. Thus, let us denote the first $T+1$ shares received by the $i$-th client by $[y_i]_{([T+1])}=\left[ [y_i]_{(1)}\cdots[y_i]_{(T+1)}\right]^{\top}$, the corresponding evaluation points to these shares are denoted by $\omega_{(1)},\dots\omega_{(T+1)}$, respectively, for $i\in[N]$. Hence, one can write
\begin{equation}
\label{eq:system}
\begin{aligned}
[y_i]_{([T+1])}
=&
\begin{bmatrix}
\sum_{l=1}^N a_{i,l}[x_l]_{(1)} & \cdots & \sum_{l=1}^N a_{i,l}[x_l]_{(T+1)}
\end{bmatrix}^{\top}\\
=&
\begin{bmatrix}
1 & \omega_{(1)} & \cdots & \omega_{(1)}^{T}\\
\vdots & \vdots & \ddots & \vdots \\
1 & \omega_{(T+1)} & \cdots & \omega_{(T+1)}^{T}
\end{bmatrix}
\begin{bmatrix}
\sum_{l=1}^N a_{i,l}x_l \\
\sum_{l=1}^N a_{i,l}n_{l,1} \\
\vdots\\
\sum_{l=1}^N a_{i,l}n_{l,T}
\end{bmatrix}\\
= &
\,\mathbf{G}\begin{bmatrix}
z_0 & z_1 & \cdots & z_T
\end{bmatrix}^{\top}=\mathbf{Gz},
\end{aligned}
\end{equation}
where $\mathbf{z}=\begin{bmatrix}
z_0 & z_1 & \cdots & z_T
\end{bmatrix}^{\top}$, $z_0=y_i=\sum_{l=1}^N a_{i,l}x_l$, which is the secret $y_i$, $z_k=\sum_{l=1}^N a_{i,l}n_{l,k}$ for $k\in[T]$ and $i\in[N]$, and 
\begin{equation}
\mathbf{G}=\begin{bmatrix}
1 & \omega_{(1)} & \cdots & \omega_{(1)}^{T}\\
\vdots & \vdots & \ddots & \vdots \\
1 & \omega_{(T+1)} & \cdots & \omega_{(T+1)}^{T}
\end{bmatrix}.
\end{equation}
Note that the $i$-th client does not require the entire $\mathbf{z}$ but only $z_0$, which is the secret $y_i$, for $i\in [N]$. 

Let $\tilde{\mathbf{g}}$ denote the first row of $\mathbf{G}^{-1}$, the inverse of $\mathbf{G}$, which is well-defined since $\mathbf{G}$ is a Vandermonde matrix. Then, the $i$-th client only has to compute $\tilde{\mathbf{g}}[y_i]_{([T+1])}$ to recover the secret $y_i$ based on the first received $T+1$ shares, for $i\in[N]$. We summarize the computations discussed in this section in Algorithm \ref{Algo:A-MPC}.

\begin{algorithm}[t]
\caption{\texttt{A-MPC}}
\label{Algo:A-MPC}
\begin{algorithmic}[1]
\REQUIRE{Number of clients $N$, number of colluding clients $T$, public parameters $\omega_j$'s for $j\in[N]$.}
\renewcommand{\algorithmicrequire}{\textbf{Input:}}
\renewcommand{\algorithmicensure}{\textbf{Output:}}
\REQUIRE{Datasets $x_j$'s for clients $j\in[N]$.}
\ENSURE  {Computation results $y_i$'s of clients $i\in[N]$.}

\vspace{2mm}
\textbf{Secret sharing stage.} In this stage, each client receives shares of secrets of all the other clients, including themselves. Client $i$ holds $\{[x_1]_i,\dots,[x_N]_i\}$, for $i\in[N]$.

\vspace{2mm}
\textbf{Computation stage.} Repeat the following until all required computations in the given function have been processed.
\begin{itemize}
    \item \textbf{Addition computation.} We assume an analog addition for $u_1$ and $u_2$, where $u_1,u_2\in\mathbb{R}$. The $i$-th client holds secret shares for $u_1$ and $u_2$ as $[u_1]_i$ and $[u_2]_i$, for $i \in [N]$. Client $i$ computes $[u_1]_i+[u_2]_i$, for $i\in[N]$.
    \item \textbf{Multiply-by-a-constant computation.} We assume an analog multiply-by-a-constant for $u$, where $u\in\mathbb{R}$. The $i$-th client holds secret share for $u$ as $[u]_i$, for $i \in [N]$. We consider the multiplication of a constant $l$ and $u$. Client $i$ computes $l\cdot[u]_i$, for $i\in[N]$.
    \item \textbf{Multiplication computation.} We assume an analog multiplication for $u$ and $v$, where $u,v\in\mathbb{R}$. The $i$-th client holds secret shares for $u$ and $v$ as $[u]_i$ and $[v]_i$, for $i\in[N]$. Refer to Algorithm \ref{Algo:analogmul} with $u,v$ as the inputs, we can compute the secret share $[uv]_i$ for client $i$, for $i\in[N]$.
\end{itemize}

\textbf{Output reconstruction stage.} Client $i$, once collected at least $T+1$ shares, $[y_i]_{([T+1])}$'s, can start the reconstruction of the computation result by computing $y_i=\tilde{\mathbf{g}}[y_i]_{([T+1])}$, for $i\in[N]$.
\end{algorithmic}
\end{algorithm}

\subsubsection{Accuracy analysis}
By combining the proposed \emph{addition}, \emph{multiply-by-a-constant}, and \emph{multiplication triples} computations in the analog domain, the \texttt{A-MPC} protocol is thus capable of computing any polynomial function. It is worth noting that, \eqref{eq:addition}, \eqref{eq:multiplyconst}, and \eqref{eq:triples} imply that all computations require only linear computations of the secret shares in order to carry out the multiplication. In other words,  all the computations carried out over the secret shares are linear. The following theorem characterizes the perturbation in the computation outcome of the linear computations, establishing an upper limit on the worst-case computation error. It is assumed that the local computations performed by the clients do not impose any errors other than precision loss due to the finite representation of the results.
\begin{theorem}
Let $\Delta y_i$ denote the perturbation of $y_i$ in the protocol, for $i\in[N]$. Let $tT+r\geq 1$, $c=\sum_{l=1}^{N}|a_{i,l}|$, we have
\begin{equation}
\Delta y_i\leq c\sqrt{T+1}\cdot(r+tT)\frac{\kappa_G}{\lambda_{\mathrm{min}}}2^{-b_m},
\end{equation}
where $T$ denotes the maximum number of colluding clients, $t$ is the truncation parameter for the truncated Gaussian distribution, $\kappa_G$ is the condition number, $\lambda_{\mathrm{min}}$ is the minimum singular value of $\mathbf{G}$, $b_m$ is the number of precision bits, and $r$ is the bound on the absolute value of the secrets, for $i\in[N]$.
\end{theorem}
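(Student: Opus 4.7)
The plan is to track how finite-precision representation of the shares propagates through the linear Vandermonde solve that each client performs in the output-reconstruction stage. The claim is essentially a quantitative perturbation bound for a linear system, so the argument decomposes into two steps: (a) bound the magnitude of the error injected into the assembled share vector $\mathbf{b}=[y_i]_{([T+1])}$, and (b) amplify that error through $\mathbf{G}^{-1}$ to the scalar $y_i=z_0$.

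First I would bound each coordinate of $\mathbf{b}$ in absolute value. Using the identity $[y_i]_{(k)}=\sum_{l=1}^{N}a_{i,l}\bigl(x_l+\sum_{j=1}^{T}\omega_{(k)}^{j}n_{l,j}\bigr)$ from the construction in \eqref{eq:system}, together with $|x_l|\le r$, $|\omega_{(k)}|=1$, and the truncation guarantee $|n_{l,j}|\le t$ from the resampling step, the triangle inequality gives $|[y_i]_{(k)}|\le c(r+tT)$ with $c=\sum_{l}|a_{i,l}|$. Representing each coordinate with $b_m$ precision bits then induces a per-entry absolute error bounded by $c(r+tT)\,2^{-b_m}$; summing over the $T+1$ coordinates of $\mathbf{b}$ yields the vector-level estimate $\|\Delta\mathbf{b}\|_{2}\le \sqrt{T+1}\,c(r+tT)\,2^{-b_m}$.

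Next I would return to the reconstruction equation $\mathbf{G}\mathbf{z}=\mathbf{b}$ of \eqref{eq:system}, noting that $y_i=z_0$ is just the first coordinate of $\mathbf{z}=\mathbf{G}^{-1}\mathbf{b}$, so $|\Delta y_i|\le \|\Delta\mathbf{z}\|_{2}$. A standard perturbation bound for linear systems gives $\|\Delta\mathbf{z}\|_{2}\le (\kappa_G/\lambda_{\min})\,\|\Delta\mathbf{b}\|_{2}$, where $\lambda_{\min}$ is the smallest singular value of $\mathbf{G}$ and $\kappa_G=\lambda_{\max}/\lambda_{\min}$ is its condition number. Chaining this with the estimate for $\|\Delta\mathbf{b}\|_{2}$ from the previous step immediately produces the inequality in the theorem. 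The mild hypothesis $tT+r\ge 1$ is what lets one consolidate the various first-order error contributions into the single clean product $c\sqrt{T+1}(r+tT)\kappa_G\lambda_{\min}^{-1}2^{-b_m}$ rather than an expression with additive lower-order terms.

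The main obstacle is justifying the $\kappa_G/\lambda_{\min}$ amplification rather than the naive $\|\mathbf{G}^{-1}\|_2=1/\lambda_{\min}$ that would arise if $\mathbf{G}$ were treated as exactly known. The extra $\kappa_G$ factor reflects that the entries $\omega_{(k)}^{j}$ of $\mathbf{G}$ are themselves stored in finite precision, so one must apply the classical Wilkinson-type two-source perturbation theorem (perturbations in both $\mathbf{G}$ and $\mathbf{b}$), verify that the $2^{-b_m}$ roundoff in $\mathbf{G}$ can be absorbed into the same precision slack used for $\mathbf{b}$, and confirm that the first-order linearization does not incur a dimensional factor beyond the $\sqrt{T+1}$ already accounted for. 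Once that is handled, the rest of the argument is a clean chain of triangle inequalities.
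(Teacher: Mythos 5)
Your proposal is correct and, in its overall architecture, matches the paper's proof: the per-coordinate bound $|[y_i]_{(k)}|\le c(r+tT)$ via the triangle inequality with $|x_l|\le r$, $|\omega_{(k)}|=1$ and the truncation parameter $t$; the factor $\sqrt{T+1}$ from passing to the $\ell_2$ norm of the $(T+1)$-dimensional share vector; the factor $2^{-b_m}$ from finite precision; and the reduction $\Delta y_i\le\lVert\Delta\mathbf{z}\rVert$ followed by amplification through the Vandermonde solve. Where you diverge is the origin of the $\kappa_G/\lambda_{\min}$ factor, and here you manufacture an obstacle that does not exist. The paper never perturbs $\mathbf{G}$: it applies the standard \emph{relative} conditioning bound $\lVert\Delta\mathbf{z}\rVert/\lVert\mathbf{z}\rVert\le\kappa_G\,\lVert\Delta\mathbf{b}\rVert/\lVert\mathbf{b}\rVert$ for a perturbation of the right-hand side only, combines it with the separate estimate $\lVert\mathbf{z}\rVert\le\lVert\mathbf{b}\rVert/\lambda_{\min}$ and the relative precision model $\lVert\Delta\mathbf{b}\rVert/\lVert\mathbf{b}\rVert\le 2^{-b_m}$, and then inserts $\lVert\mathbf{b}\rVert\le c\sqrt{T+1}(r+tT)$; the product of these is exactly the stated bound. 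Your absolute inequality $\lVert\Delta\mathbf{z}\rVert\le(\kappa_G/\lambda_{\min})\lVert\Delta\mathbf{b}\rVert$ is also valid, but it needs no Wilkinson-type two-source theorem: it follows immediately from the sharp bound $\lVert\Delta\mathbf{z}\rVert\le\lVert\mathbf{G}^{-1}\rVert_2\lVert\Delta\mathbf{b}\rVert=\lambda_{\min}^{-1}\lVert\Delta\mathbf{b}\rVert$ together with $\kappa_G\ge 1$ (indeed both your bound and the paper's are looser than this sharp one by exactly the factor $\kappa_G$, which is an artifact of the bookkeeping rather than a genuine source of error). Accordingly, there is no roundoff in $\mathbf{G}$ to absorb, and the hypothesis $tT+r\ge 1$ is not what consolidates lower-order terms --- it is not explicitly invoked anywhere in the paper's chain of inequalities (it would only matter if one adopted a fixed-point model with absolute per-entry roundoff $2^{-b_m}$, in which case $c(r+tT)\ge 1$ is what lets that absolute error be dominated by $c(r+tT)2^{-b_m}$). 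With the final amplification step justified in either of these two equivalent ways, your argument is complete.
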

\begin{proof}
We have $[y_i]_{([T+1])}=\mathbf{G}\mathbf{z}$, for $i\in[N]$. Since the minimum singular value of $\mathbf{G}$ is given as $\lambda_\mathrm{min}$, then
\begin{equation}
\label{eq:sigular_ineq}
\lVert \mathbf{z} \rVert \leq \frac{\lVert [y_i]_{([T+1])} \rVert}{\lambda_\mathrm{min}},
\end{equation}
for $i\in[N]$.
Furthermore, one can write
\begin{equation}
\label{eq:y_upper}
\begin{aligned}
\lVert [y_i]_{([T+1])} \rVert &= (\sum_{k=1}^{T+1}\lvert [y_i]_{(k)} \rvert^2)^{\frac{1}{2}}\leq \sqrt{T+1}\max_{k\in [T+1]} \lvert [y_i]_{(k)} \rvert\\
&=\sqrt{T+1}\max_{k\in[T+1]}\lvert \sum_{l=1}^N a_{i,l}\cdot [x_l]_{(k)}\rvert \\
&\leq (\sum_{l=1}^{N}|a_{i,l}|)\cdot \sqrt{T+1}\max_{k\in[T+1]} \max_{l\in[N]}[x_l]_{(k)} \\
&= c\sqrt{T+1}\max_{k\in[T+1]} \max_{l\in[N]}(x_l+\sum_{j=1}^T\omega_{(k)}^j n_{l,j}) \\
&\leq c\sqrt{T+1}\cdot(r+tT),
\end{aligned}
\end{equation}
for $i\in[N]$.
Also, by noting that $[y_i]_{([T+1])}=\mathbf{G}\mathbf{z}$ as in \eqref{eq:system}, and that $y_i=\sum_{l=1}^N a_{i,l}x_l$ is an entry of $\mathbf{z}$, we have
\begin{equation}
\label{eq:perturb_y_z}
\Delta y_i \leq \lVert \Delta\mathbf{z} \rVert,
\end{equation}
for $i\in[N]$.
In \cite{analogss}, the relative perturbations of a system of linear equations $[y_i]_{([T+1])}=\mathbf{G}\mathbf{z}$ can be formulated as
\begin{equation}
\label{eq:relative_perturb_bound}
\frac{\lVert\Delta\mathbf{z}\rVert}{\lVert\mathbf{z}\rVert}\leq\kappa_{{G}}\frac{\lVert\Delta [y_i]_{([T+1])}\rVert}{\lVert [y_i]_{([T+1])} \rVert},
\end{equation}
for $i\in[N]$.
Combining \eqref{eq:sigular_ineq}, \eqref{eq:y_upper}, \eqref{eq:perturb_y_z}, together with \eqref{eq:relative_perturb_bound} results in
\begin{equation}
\label{eq:perturb_plug}
\frac{\Delta y_i\cdot\lambda_{\mathrm{min}}}{c\sqrt{T+1}\cdot(r+tT)}\leq\kappa_G\frac{\lVert\Delta [y_i]_{([T+1])}\rVert}{\lVert [y_i]_{([T+1])} \rVert},
\end{equation}
for $i\in[N]$.
Furthermore, since $b_m$ is the number of precision bits, we have
\begin{equation}
\label{eq:precision}
\frac{\lVert\Delta [y_i]_{([T+1])}\rVert}{\lVert [y_i]_{([T+1])} \rVert}\leq 2^{-b_m},
\end{equation}
for $i\in[N]$.
Combining \eqref{eq:perturb_plug} and \eqref{eq:precision} yields
\begin{equation}
\Delta y_i\leq c\sqrt{T+1}\cdot(r+tT)\frac{\kappa_G}{\lambda_{\mathrm{min}}}2^{-b_m},
\end{equation}
for $i\in[N]$.
\end{proof}





\section{ \texttt{A-MPC}: Analysis of Local Differential Privacy}
\label{sec:LDP}
In this section, we analyze the privacy of \texttt{A-MPC} against the worst case for colluding clients, i.e., when $T=N-1$. We focus on analyzing the privacy guarantees for each client in the secret sharing stage. First, the definition of $(\epsilon,\delta)$-local differential privacy for the protocol is provided. Then, we characterize the noise variance required in the protocol in order to satisfy the desired privacy level.

\begin{definition}[$(\epsilon,\delta)$-Local differential privacy for \texttt{A-MPC}]
Given a randomized mechanism $\mathcal{M}_{ji}:\mathbb{R}^{m\times n}\rightarrow{\mathbb{C}^{m\times n}}$, for $i,j\in[N]$. A protocol is $(\epsilon,\delta)$-locally differential private for the $j$-th client, for $j\in[N]$, if for the neighboring datasets $\mathbf{X}_j,\mathbf{X}_j^\prime\in\mathbb{R}^{m\times n}$, where $\mathrm{dist}(\mathbf{X}_j,\mathbf{X}_j^\prime)=1$, and $\mathcal{T}\subset \mathbb{C}^{m\times n}$,
\begin{equation}
\mathbb{P}(\mathcal{M}_{ji}(\mathbf{X}_j)\in\mathcal{T})\leq e^\epsilon\cdot\mathbb{P}(\mathcal{M}_{ji}(\mathbf{X}_j^\prime)\in\mathcal{T})+\delta,
\end{equation}
where $i,j\in[N]$.
\end{definition}
Note that the randomized mechanism $\mathcal{M}_{ji}$ is the mechanism for the $j$-th client sharing its secrets to the $i$-th client, i.e.,
\begin{equation}
\label{eq:random_mechanism}
\mathcal{M}_{ji}(\mathbf{X}_j)=[\mathbf{X}_j]_i=\mathbf{X}_j+\tilde{\mathbf{N}}_{ji}=\mathbf{X}_j+\sum_{k=1}^{T}\omega_i^k\mathbf{N}_{j,k},
\end{equation}
for $i,j\in[N]$. The definition of local sensitivity for the proposed \texttt{A-MPC} is given as follows.
\begin{definition}[Local sensitivity for \texttt{A-MPC}]
Let   $\mathbf{X}_j$ and $\mathbf{X}_j^\prime$ denote two neighboring datasets at the $j$-th client, where $\mathbf{X}_j,\mathbf{X}_j^\prime\in\mathbb{R}^{m\times n}$, for $j\in[N]$. The sensitivity is defined as 
\begin{equation}
\Delta\overset{\mathrm{def}}{=}\max_{\mathrm{dist}(\mathbf{X}_j,\mathbf{X}_j^\prime)=1}||\mathbf{X}_j-\mathbf{X}_j^\prime||_F,
\end{equation}
for $j\in[N]$.
\end{definition}
We formulate the absolute value of the privacy loss function
for scalar-valued computations in the following lemma.

\begin{lemma}
Consider a pair of neighboring datasets $d_j,d_j^\prime\in\mathbb{R}$ in the \texttt{A-MPC} protocol, where $d_j=d_j^\prime-\Delta$, for $j\in[N]$. Then the absolute value of the privacy loss is
\begin{equation}
|l_{\mathcal{M}_{ji},d,d^\prime}(y)|=|\frac{1}{2\sigma^2}(-2y\Delta+(\Delta)^2)|\cdot\mathbb{I}_{[-t+\Delta,t]}(y),
\end{equation}
for $i,j\in[N]$.
\end{lemma}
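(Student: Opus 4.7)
The plan is to unwind the definition of the privacy loss function directly. By the preceding lemma, $\tilde{N}_{ji} \sim \mathcal{TN}(0,\sigma^2;[-t,t])$, so the randomized mechanism $\mathcal{M}_{ji}(d_j) = d_j + \tilde{N}_{ji}$ in \eqref{eq:random_mechanism} has density given by the shift of the truncated Gaussian PDF in \eqref{eq:trunpdf}:
\begin{equation*}
p_{\mathcal{M}_{ji}(d_j)}(y) = \frac{\phi(y - d_j)}{2\Phi(t/\sigma)-1}\cdot\mathbb{I}_{[-t+d_j,\,t+d_j]}(y),
\end{equation*}
and analogously for $d_j'$. By translation invariance, I would set $d_j=0$ and $d_j'=\Delta$ without loss of generality, since only the difference $d_j'-d_j=\Delta$ affects the log-ratio.

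Next, I would plug these two densities into the definition \eqref{eq:plf}. The normalization constant $2\Phi(t/\sigma)-1$ and the Gaussian prefactor $1/\sqrt{2\pi\sigma^2}$ appearing in $\phi$ both cancel in the ratio, leaving
\begin{equation*}
\frac{p_{\mathcal{M}_{ji}(d_j)}(y)}{p_{\mathcal{M}_{ji}(d_j')}(y)} = \frac{\exp(-y^2/(2\sigma^2))\,\mathbb{I}_{[-t,t]}(y)}{\exp(-(y-\Delta)^2/(2\sigma^2))\,\mathbb{I}_{[-t+\Delta,\,t+\Delta]}(y)}.
\end{equation*}
On the intersection of supports $[-t,t]\cap[-t+\Delta,\,t+\Delta] = [-t+\Delta,\,t]$ (assuming $0<\Delta<2t$, which is the relevant regime), both indicators equal one, and the log-ratio reduces by expanding the quadratic exponents to
\begin{equation*}
-\frac{y^2}{2\sigma^2} + \frac{(y-\Delta)^2}{2\sigma^2} = \frac{-2y\Delta + \Delta^2}{2\sigma^2}.
\end{equation*}
Taking absolute values and attaching the indicator $\mathbb{I}_{[-t+\Delta,t]}(y)$ that enforces membership in the common support then produces the claimed identity.

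The main obstacle is justifying the restriction to the intersection of supports. On the sliver $y\in[-t,\,-t+\Delta)$ only $p_{\mathcal{M}_{ji}(d_j)}(y)$ is positive while $p_{\mathcal{M}_{ji}(d_j')}(y)=0$, so the log-ratio is formally $+\infty$; symmetrically, on $(t,\,t+\Delta]$ it is $-\infty$. I would explicitly note that the lemma's statement masks these edge regions via the indicator $\mathbb{I}_{[-t+\Delta,t]}$ with the understanding that the probability mass assigned by the truncated Gaussian to those two slivers is controlled separately and folded into the failure probability $\delta$ later when $(\epsilon,\delta)$-LDP is verified. With that caveat in place, the remainder is a direct computation analogous to the untruncated Gaussian mechanism calculation already done in \eqref{eq:plf_gaussian}, the only new ingredient being the bookkeeping of the truncation intervals.
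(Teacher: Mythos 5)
Your proposal is correct and follows essentially the same route as the paper's proof: write both densities as shifted truncated Gaussians, cancel the normalization constant $2\Phi(t/\sigma)-1$ and the Gaussian prefactor in the ratio, expand the quadratic exponents, and restrict to the common support $[-t,t]\cap[-t+\Delta,t+\Delta]=[-t+\Delta,t]$. Your explicit remark about the two slivers where the ratio is formally $\pm\infty$ is a point the paper glosses over, but it does not change the argument.
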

\begin{proof}
Note that one can write the PDFs of the perturbed output of the mechanisms as
\begin{equation}
\label{eq:pdfampcscalar1}
p_{\mathcal{M}_{ji}(d)}(y)=\frac{\frac{1}{\sqrt{2\pi\sigma^2}}\exp(-\frac{y^2}{2\sigma^2})}{2\Phi(\frac{t}{\sigma})-1}\cdot\mathbb{I}_{[-t,t]}(y),
\end{equation}
and,
\begin{equation}
\label{eq:pdfampcscalar2}
p_{\mathcal{M}_{ji}(d^\prime)}(y)=\frac{\frac{1}{\sqrt{2\pi\sigma^2}}\exp(-\frac{(y-\Delta)^2}{2\sigma^2})}{2\Phi(\frac{t}{\sigma})-1}\cdot\mathbb{I}_{[-t+\Delta,t+\Delta]}(y),
\end{equation}
for $i,j\in[N]$.
Then, by \eqref{eq:pdfampcscalar1} and \eqref{eq:pdfampcscalar2}, we obtain the absolute value of the privacy loss function as
\begin{equation}
\label{eq:plf-truncated_scalar}
\begin{aligned}
&|l_{\mathcal{M}_{ji},d,d^\prime}(y)|=|\mathrm{ln}(\frac{p_{\mathcal{M}_{ji}(d)}(y)}{p_{\mathcal{M}_{ji}(d^\prime)}(y)})|\\
=&|\frac{1}{2\sigma^2}(-2y\Delta+(\Delta)^2)|\cdot\mathbb{I}_{[-t+\Delta,t]}(y),
\end{aligned}
\end{equation}
where the indicator function $\mathbb{I}(\cdot)$ takes non-zero values in $[-t,t]\cap[-t+\Delta,t+\Delta]=[-t+\Delta,t]$.
\end{proof}

In the following theorem, we show that, under certain constraints,  \texttt{A-MPC} guarantees $(\epsilon,\delta)$-local differential privacy. 
\begin{theorem}
\label{thm:highdim}
The \texttt{A-MPC} protocol is $(\epsilon,\delta)$-locally differential private if $\mathbf{X}_j=\mathbf{X}_j^\prime+\mathbf{W}$ and 
\begin{equation}
\label{eq:LDPthm-AMPC}
0\leq 1-\frac{\Phi(\frac{\sigma\epsilon}{\Delta}+\frac{\Delta}{2\sigma})-\Phi(-\frac{\sigma\epsilon}{\Delta}+\frac{\Delta}{2\sigma})}{2\Phi(\frac{t}{\sigma})-1}\leq \delta,
\end{equation}
where $\mathbf{X}_j$ and $\mathbf{X}_j^\prime$ are a pair of neighboring datasets, $\mathbf{X}_j,\mathbf{X}_j^\prime,\mathbf{W}\in\mathbb{R}^{m\times n}$, $||\mathbf{W}||_F\leq\Delta$, and $\sigma\in (0,\sqrt{\frac{t\cdot\Delta}{\epsilon}-\frac{\Delta^2}{2\epsilon}})$.
\end{theorem}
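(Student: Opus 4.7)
The plan is to leverage the preceding lemma, which already gave the closed form of $|l_{\mathcal{M}_{ji},d,d'}(y)|$ in the scalar case, and convert the tail condition $\mathbb{P}[|L|\le\epsilon]\ge 1-\delta$ implicit in Definition~\ref{def:LDP} into the explicit expression in \eqref{eq:LDPthm-AMPC}. The starting observation is that since both $\mathbf{X}_j$ and $\mathbf{X}_j'$ are real while the noise $\tilde{\mathbf{N}}_{ji}$ has independent real and imaginary parts (each truncated Gaussian), the imaginary part is invariant under swapping $\mathbf{X}_j\leftrightarrow\mathbf{X}_j'$ and therefore cancels in the likelihood ratio. Writing $\mathbf{W}=\mathbf{X}_j-\mathbf{X}_j'$ with $\|\mathbf{W}\|_F\le\Delta$ and projecting $\mathrm{Re}(\tilde{\mathbf{N}}_{ji})$ onto the direction $\mathbf{W}/\|\mathbf{W}\|_F$ collapses the matrix log-likelihood ratio to the same scalar form $\frac{1}{2\sigma^2}(-2y\Delta+\Delta^2)$ that appears in the preceding lemma, so the scalar privacy-loss expression applies with only the replacement $\Delta\mapsto\|\mathbf{W}\|_F$; the worst case over neighboring pairs is attained at $\|\mathbf{W}\|_F=\Delta$.

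Next, I would invert $|L|\le\epsilon$ into an interval for $y$: solving $|{-2y\Delta+\Delta^2}|\le 2\sigma^2\epsilon$ yields
\begin{equation*}
y\in\Bigl[\tfrac{\Delta}{2}-\tfrac{\sigma^2\epsilon}{\Delta},\ \tfrac{\Delta}{2}+\tfrac{\sigma^2\epsilon}{\Delta}\Bigr].
\end{equation*}
A short calculation using the hypothesis $\sigma^2<t\Delta/\epsilon-\Delta^2/(2\epsilon)$ shows that this interval lies inside $[-t+\Delta,\,t]$, which is exactly the support of the indicator factor in the privacy-loss formula of the preceding lemma; hence the indicator contributes $1$ throughout the interval and can be dropped. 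Evaluating the probability of the interval under the truncated-Gaussian density \eqref{eq:trunpdf} of $Y$ gives
\begin{equation*}
\mathbb{P}[|L|\le\epsilon]=\frac{\Phi\!\bigl(\tfrac{\sigma\epsilon}{\Delta}+\tfrac{\Delta}{2\sigma}\bigr)-\Phi\!\bigl(-\tfrac{\sigma\epsilon}{\Delta}+\tfrac{\Delta}{2\sigma}\bigr)}{2\Phi(t/\sigma)-1},
\end{equation*}
after substituting the endpoints divided by $\sigma$. Rearranging the inequality $\mathbb{P}[|L|\le\epsilon]\ge 1-\delta$ produces the upper bound in \eqref{eq:LDPthm-AMPC}, while the trivial upper bound of $1$ on the same probability yields the non-negativity lower bound.

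The step I expect to be most delicate is the reduction from the matrix/complex setting to the scalar calculation, since a linear combination of independent truncated Gaussians is not itself a truncated Gaussian, which would contaminate the CDF step. I would resolve this either by restricting attention to the worst-case $\mathbf{W}$ with a single nonzero entry of magnitude $\Delta$ (in which case only one scalar noise entry enters the likelihood ratio and the preceding lemma applies verbatim), or by arguing that the Frobenius-norm sensitivity, combined with the rotational symmetry of the likelihood ratio in the noise coordinates, reduces the analysis to the one-dimensional projection; the constraint on $\sigma$ in the theorem is tailored to this one-dimensional reduction and is the quantity that drives the interval to fit inside $[-t+\Delta,t]$.
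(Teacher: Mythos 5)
Your proposal matches the paper's own proof in all essential respects: the paper likewise reduces the matrix case to one dimension by choosing an orthonormal basis whose first vector is parallel to $\mathrm{vec}(\mathbf{W})$ (your projection onto $\mathbf{W}/\lVert\mathbf{W}\rVert_F$, justified via the Pythagorean theorem), invokes the scalar privacy-loss formula of the preceding lemma with $\lVert\mathrm{vec}(\mathbf{W})\rVert\le\Delta$, identifies the region where the loss exceeds $\epsilon$ (the complement of your interval $[\frac{\Delta}{2}-\frac{\sigma^2\epsilon}{\Delta},\,\frac{\Delta}{2}+\frac{\sigma^2\epsilon}{\Delta}]$), derives the stated range of $\sigma$ from the requirement that this region sit consistently inside the truncation window, and evaluates its probability under the truncated Gaussian to obtain exactly the expression in \eqref{eq:LDPthm-AMPC}. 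The delicate point you flag --- that the coordinates of the noise in the rotated basis are treated as truncated Gaussians even though entrywise truncation is not rotation-invariant --- is present in the paper's argument as well and is not resolved there either, so your proof is faithful to the paper's approach.
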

\begin{proof}
The proof can be found in Appendix~\ref{app:highdim}.
\end{proof}

For the sake of simplifying the derivations in the rest of the paper, we introduce a variable $\alpha$ where $\sigma=\frac{\alpha \cdot \Delta}{\sqrt{2\epsilon}}$.
\begin{corollary}
Given $\epsilon, \delta, t, \Delta$, we can simplify \eqref{eq:LDPthm-AMPC} by introducing a variable $\alpha$ which satisfies $\sigma=\frac{\alpha \cdot \Delta}{\sqrt{2\epsilon}}$ as follows:
\begin{equation}
\label{eq:Bineq}
0\leq 1-\frac{\Phi(\sqrt{\frac{\epsilon}{2}}(\alpha+\frac{1}{\alpha}))-\Phi(\sqrt{\frac{\epsilon}{2}}(-\alpha+\frac{1}{\alpha}))}{2\Phi(\frac{t\sqrt{2\epsilon}}{\alpha\cdot \Delta})-1}\leq \delta,
\end{equation}
where $\alpha \in (0,\sqrt{\textstyle\frac{2t}{\Delta}-1}).$
\end{corollary}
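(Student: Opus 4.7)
The plan is straightforward: the corollary is a change-of-variables restatement of Theorem~\ref{thm:highdim}, so the proof reduces to a direct algebraic substitution of $\sigma = \alpha\Delta/\sqrt{2\epsilon}$ into the inequality \eqref{eq:LDPthm-AMPC} and into its admissible range for $\sigma$.

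First, I would rewrite each of the three quantities appearing inside the $\Phi(\cdot)$ terms in \eqref{eq:LDPthm-AMPC} in terms of $\alpha$. Using $\sigma = \alpha\Delta/\sqrt{2\epsilon}$, the ratio $\sigma\epsilon/\Delta$ collapses to $\alpha\sqrt{\epsilon/2}$, and the ratio $\Delta/(2\sigma)$ collapses to $(1/\alpha)\sqrt{\epsilon/2}$. Adding and subtracting these two expressions immediately yields $\sigma\epsilon/\Delta + \Delta/(2\sigma) = \sqrt{\epsilon/2}(\alpha+1/\alpha)$ and $-\sigma\epsilon/\Delta + \Delta/(2\sigma) = \sqrt{\epsilon/2}(-\alpha+1/\alpha)$, reproducing the two arguments of $\Phi$ in the numerator of \eqref{eq:Bineq}. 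A similar computation gives $t/\sigma = t\sqrt{2\epsilon}/(\alpha\Delta)$, which is precisely the argument of $\Phi$ in the denominator.

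Next, I would translate the admissible range $\sigma\in(0,\sqrt{t\Delta/\epsilon-\Delta^2/(2\epsilon)})$ from Theorem~\ref{thm:highdim} into a range for $\alpha$. Squaring the upper bound gives $\sigma^2 < \Delta(2t-\Delta)/(2\epsilon)$; plugging in $\sigma^2 = \alpha^2\Delta^2/(2\epsilon)$ and cancelling the common factor $\Delta/(2\epsilon)$ yields $\alpha^2\Delta < 2t-\Delta$, i.e., $\alpha < \sqrt{2t/\Delta - 1}$. The lower bound $\sigma>0$ corresponds to $\alpha>0$, giving the stated interval $\alpha\in(0,\sqrt{2t/\Delta-1})$.

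Since the map $\sigma\mapsto \alpha = \sigma\sqrt{2\epsilon}/\Delta$ is a strictly increasing bijection between the two intervals and the three substitutions above are exact equalities, the new inequality \eqref{eq:Bineq} is equivalent to \eqref{eq:LDPthm-AMPC}, and the corollary follows from Theorem~\ref{thm:highdim}. There is no substantive obstacle to this argument; the only minor care required is verifying that the factor $\Delta/(2\epsilon)$ cancels cleanly when deriving the new range for $\alpha$ and confirming that the required positivity conditions (so that the squaring step is reversible) are preserved under the substitution, both of which hold since $\alpha>0$ and $2t>\Delta$ whenever the original range for $\sigma$ is nonempty.
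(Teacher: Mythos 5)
Your proposal is correct and follows essentially the same route as the paper: a direct substitution of $\sigma=\alpha\Delta/\sqrt{2\epsilon}$ into the arguments of $\Phi$ in \eqref{eq:LDPthm-AMPC} and into the admissible range \eqref{eq:sigmarange}, yielding \eqref{eq:Bineq} and the interval $\alpha\in(0,\sqrt{2t/\Delta-1})$. Your version is in fact more explicit than the paper's, which simply states that the substitution "yields" the result without carrying out the algebra.
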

\begin{proof}
In \eqref{eq:LDPthm-AMPC}, the term $-\frac{\sigma\epsilon}{\Delta}+\frac{\Delta}{2\sigma}$ in the numerator changes sign at $\frac{\sigma\epsilon}{\Delta}=\frac{\Delta}{2\sigma}$, where $\sigma=\frac{\Delta}{\sqrt{2\epsilon}}$. 
To ease the computation, we set $\sigma=\frac{\alpha \cdot \Delta}{\sqrt{2\epsilon}}$ to substitute the $\sigma$'s in \eqref{eq:LDPthm-AMPC} and \eqref{eq:sigmarange}, which yields \eqref{eq:Bineq} where $\alpha \in (0,\sqrt{\textstyle\frac{2t}{\Delta}-1})$.
\end{proof}
\noindent We define 
\begin{equation}
\label{eq:B}
B(\alpha)\deff 1-\frac{\Phi(\sqrt{\frac{\epsilon}{2}}(\alpha+\frac{1}{\alpha}))-\Phi(\sqrt{\frac{\epsilon}{2}}(-\alpha+\frac{1}{\alpha}))}{2\Phi(\frac{t\sqrt{2\epsilon}}{\alpha\cdot \Delta})-1},
\end{equation}
to simplify \eqref{eq:Bineq}, and note   that $0\leq B(\alpha)\leq \delta$.


We propose a truncated Gaussian mechanism of choosing the optimal noise variance $\sigma^2$ for the proposed \texttt{A-MPC} in terms of satisfying $(\epsilon,\delta)$-local differential privacy. 
\begin{theorem}[Analytical truncated Gaussian mechanism]
\label{thm:analytical}
The proposed \texttt{A-MPC} protocol satisfies $(\epsilon,\delta)$-local differential privacy for $\sigma=\frac{\alpha^* \cdot \Delta}{\sqrt{2\epsilon}}$, where $\alpha^*$ is obtained by:
\begin{equation}
\label{eq:alphaopt}
\begin{aligned}
\alpha^*=\,&\underset{\alpha}{\mathrm{argmax}}\, B(\alpha)\\
\mathrm{s.t.}\,\, &0\leq B(\alpha)\leq\delta,
\;
0<\alpha<\sqrt{\frac{2t}{\Delta}-1}.
\end{aligned}
\end{equation}
 
\end{theorem}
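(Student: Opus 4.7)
The plan is to derive this result as an almost direct consequence of the previous theorem together with its corollary, rather than setting up any new privacy analysis from scratch. The earlier theorem already certified that the protocol is $(\epsilon,\delta)$-locally differentially private whenever $\sigma \in (0,\sqrt{t\Delta/\epsilon - \Delta^2/(2\epsilon)})$ and the inequality in \eqref{eq:LDPthm-AMPC} holds, and the corollary recast exactly those two conditions under the reparameterization $\sigma = \alpha\Delta/\sqrt{2\epsilon}$ as $\alpha \in (0,\sqrt{2t/\Delta - 1})$ together with $0 \le B(\alpha) \le \delta$. Thus the proof reduces to verifying that the optimizer $\alpha^\ast$ of the program in \eqref{eq:alphaopt} lies in the feasibility region where the corollary applies, and then invoking the corollary to conclude.

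Concretely, I would proceed in three steps. First, observe that the constraint set $F = \{\alpha : 0 < \alpha < \sqrt{2t/\Delta - 1},\; 0 \le B(\alpha) \le \delta\}$ of the program in \eqref{eq:alphaopt} is precisely the set of $\alpha$ values to which the corollary (and hence the previous theorem) certifies $(\epsilon,\delta)$-local differential privacy. Second, note that since $\alpha^\ast \in F$ by construction (it is the argmax over $F$), the value $\sigma = \alpha^\ast \Delta/\sqrt{2\epsilon}$ obeys both the admissible range on $\sigma$ and the inequality $0 \le B(\alpha^\ast) \le \delta$, which is just the re-parameterized form of \eqref{eq:LDPthm-AMPC}. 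Third, apply the previous theorem to conclude that $\mathcal{M}_{ji}$ defined in \eqref{eq:random_mechanism} is $(\epsilon,\delta)$-locally differentially private for every pair of neighboring datasets, for each $i,j \in [N]$, which is the claim.

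To make the proof complete, I would also briefly justify that the argmax is well-defined. The function $B(\alpha)$ is continuous on the open interval $(0,\sqrt{2t/\Delta-1})$ since $\Phi$ is smooth and the denominator $2\Phi(t\sqrt{2\epsilon}/(\alpha \Delta)) - 1$ stays strictly positive throughout; one can examine the limits of $B(\alpha)$ at the endpoints (as $\alpha \to 0^+$, the numerator tends to $0$ while the denominator tends to $1$, so $B(\alpha) \to 1 > \delta$, and at the upper endpoint one can check that $B$ becomes small), so by the intermediate value theorem there exist $\alpha$ with $B(\alpha) = \delta$, guaranteeing that $F$ is non-empty and the supremum is attained on a compact sub-interval where the closure of the sub-level set $\{B \le \delta\}$ meets the admissible range.

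Finally, I would add a short remark motivating why it is natural to \emph{maximize} $B(\alpha)$ among feasible $\alpha$: since $\sigma^2 = \alpha^2 \Delta^2/(2\epsilon)$ is strictly increasing in $\alpha$, minimizing the noise variance subject to the privacy constraint amounts to choosing $\alpha$ as small as possible in $F$, and because $B$ is decreasing in $\alpha$ on the relevant regime, the smallest feasible $\alpha$ is exactly the one that drives $B(\alpha)$ up to (or as close as possible to) the budget $\delta$. The main obstacle I anticipate is the endpoint/feasibility argument that shows $F$ is non-empty and the argmax is achieved; once that is in hand, the privacy guarantee itself is immediate from the previously established theorem.
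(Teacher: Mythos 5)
Your proposal is correct and follows essentially the same route the paper takes: the paper gives no separate proof of this theorem, treating it as an immediate consequence of Theorem~\ref{thm:highdim} and its corollary (any feasible $\alpha$ already certifies $(\epsilon,\delta)$-local differential privacy, and $\alpha^*$ is feasible by construction), with Lemma~\ref{lemma:decrease} playing exactly the role of your closing remark that monotonic decrease of $B$ makes the argmax coincide with the minimum-variance feasible choice. Your added well-definedness discussion (continuity of $B$, the limit $B(\alpha)\to 1$ as $\alpha\to 0^+$, non-emptiness of the feasible set) is a reasonable supplement that the paper itself leaves implicit.
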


Theorem \ref{thm:analytical} characterizes an optimization problem aimed at determining the optimal noise standard deviation $\sigma$ to achieve  $(\epsilon, \delta)$-local differential privacy.

The following lemma  indicates that $B(\alpha)$ is a monotonically decreasing function of $\alpha$ for $\alpha\in (0,\sqrt{\frac{2t}{\Delta}-1})$.
\begin{lemma}
\label{lemma:decrease}
The function $B(\alpha)$, specified in \eqref{eq:B}, is monotonically decreasing in $\alpha$ for $\alpha\in (0,\sqrt{\frac{2t}{\Delta}-1})$.
\end{lemma}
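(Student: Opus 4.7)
The plan is to write $B(\alpha) = 1 - F(\alpha)$, where
\[
F(\alpha) \;=\; \frac{N(\alpha)}{D(\alpha)}, \quad N(\alpha) = \Phi\bigl(c(\alpha+\tfrac{1}{\alpha})\bigr) - \Phi\bigl(c(\tfrac{1}{\alpha}-\alpha)\bigr), \quad D(\alpha) = 2\Phi\bigl(\tfrac{K}{\alpha}\bigr) - 1,
\]
with the shorthand $c = \sqrt{\epsilon/2}$ and $K = t\sqrt{2\epsilon}/\Delta$. Showing $B'(\alpha) < 0$ on $(0,\sqrt{2t/\Delta - 1})$ is equivalent to showing $F'(\alpha) > 0$ there. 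Since $K/\alpha > 0$ we have $D(\alpha) > 0$, and since $c(\alpha+1/\alpha) > c(1/\alpha - \alpha)$ we have $N(\alpha) > 0$, so by the quotient rule it suffices to verify $N'(\alpha) > 0$ and $D'(\alpha) < 0$ for $\alpha$ in the stated range.

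The derivative of the denominator is immediate: $D'(\alpha) = -\tfrac{2K}{\alpha^{2}}\,\phi(K/\alpha) < 0$. For the numerator, writing $u=c(\alpha+1/\alpha)$ and $v=c(1/\alpha-\alpha)$, the chain rule gives
\[
N'(\alpha) \;=\; c\bigl(1-\tfrac{1}{\alpha^{2}}\bigr)\phi(u) + c\bigl(1+\tfrac{1}{\alpha^{2}}\bigr)\phi(v).
\]
The key observation is the algebraic identity $u^{2} - v^{2} = (u-v)(u+v) = (2c\alpha)(2c/\alpha) = 4c^{2} = 2\epsilon$, which yields $\phi(u) = e^{-\epsilon}\phi(v)$.

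Plugging this into $N'(\alpha)$ and factoring out $c\,\phi(v)$ gives
\[
N'(\alpha) \;=\; c\,\phi(v)\Bigl[\bigl(1-\tfrac{1}{\alpha^{2}}\bigr)e^{-\epsilon} + \bigl(1+\tfrac{1}{\alpha^{2}}\bigr)\Bigr] \;=\; c\,\phi(v)\Bigl[(1+e^{-\epsilon}) + \tfrac{1-e^{-\epsilon}}{\alpha^{2}}\Bigr].
\]
Since $\epsilon > 0$ makes $e^{-\epsilon}\in(0,1)$, both $1+e^{-\epsilon}$ and $1-e^{-\epsilon}$ are strictly positive, so $N'(\alpha) > 0$ for every $\alpha > 0$. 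Combined with $N,D>0$ and $D'<0$, the quotient rule gives $F'(\alpha) = [N'D - ND']/D^{2} > 0$, and hence $B'(\alpha) = -F'(\alpha) < 0$ on the whole interval $(0,\sqrt{2t/\Delta - 1})$.

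The one step that could superficially look problematic is the positivity of $N'(\alpha)$ when $\alpha < 1$: there $(1-1/\alpha^{2}) < 0$ so the first summand of the raw expression for $N'$ is negative, and a naive termwise bound fails. The identity $\phi(u) = e^{-\epsilon}\phi(v)$ is exactly what lets us merge the two terms onto the single factor $\phi(v)$ and expose a manifestly positive bracket; this is the crux of the argument, and once it is in place the rest of the proof is just signs and the quotient rule.
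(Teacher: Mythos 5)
Your proposal is correct and follows essentially the same route as the paper: write $B = 1 - N/D$, establish $N>0$, $D>0$, $N'>0$, $D'<0$, and conclude via the quotient rule. The only cosmetic difference is that you merge the two terms of $N'$ using the identity $\phi(u)=e^{-\epsilon}\phi(v)$ and factor out $\phi(v)$, whereas the paper factors out the exponential attached to $u$ and bounds $e^{\epsilon}\ge 1$; the two manipulations are equivalent.
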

\begin{proof}
The proof can be found in Appendix~\ref{app:decrease}.
\end{proof}


More specifically, we search for the value of $\alpha$ that minimizes the difference between $B(\alpha)$ and the given $\delta$, subject to the constraint $0 \leq B(\alpha) \leq \delta$, and then designate this value as $\alpha^*$. 
This simplifies the search of $\alpha^*$ so it can be done by a first-order iterative optimization algorithm, e.g., gradient ascent, for finding the global maximum. 
With the optimal value $\alpha^*$, we obtain the optimal noise standard deviation by setting $\sigma=\frac{\alpha^*\cdot\Delta}{\sqrt{2\epsilon}}$.
Note that $\sigma$ is obtained for the randomized mechanism $\mathcal{M}_{ji}(\mathbf{X}_j)$ in \eqref{eq:random_mechanism} where all entries in $\tilde{\mathbf{N}}_{ji}$'s are distributed according to $\mathcal{TN}(0,\sigma^2;[-t,t])$, for $i,j\in[N]$.
Furthermore, $\tilde{\mathbf{N}}_{ji}$'s are obtained by linear combinations of $\mathbf{N}_{j,k}$'s generated by clients $j\in[N]$, as $\tilde{\mathbf{N}}_{ji}=\sum_{k=1}^T \omega_i^k\mathbf{N}_{j,k}$ where all entries in $\mathbf{N}_{j,k}$'s are generated according to $\mathcal{N}(0,\sigma^2_s)$, for $i,j\in[N]$.
Therefore, one needs to determine $\sigma_s$ based on the given value of $\sigma$, as demonstrated in the following theorem.
\begin{theorem}
\label{thm:submatrices}
In order to guarantee $(\epsilon,\delta)$-local differential privacy for the \texttt{A-MPC} protocol, it is sufficient to set the noise variance $\sigma_s^2$ of each entry in $\mathbf{N}_{j,k}$'s in \eqref{eq:random_mechanism}, for $i,j\in[N]$, as 
\begin{equation}
\label{eq:sigma_submatrices}
\sigma^2_s=\frac{(\alpha^*)^2 \cdot \Delta^2}{2\epsilon T}.
\end{equation}
\end{theorem}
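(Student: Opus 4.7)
The plan is to reduce this statement to a straightforward variance-matching calculation, leveraging Theorem~\ref{thm:analytical} and the earlier lemma that characterizes the distribution of the aggregated noise $\tilde{\mathbf{N}}_{ji}$. Specifically, Theorem~\ref{thm:analytical} already establishes that $(\epsilon,\delta)$-local differential privacy holds for the randomized mechanism $\mathcal{M}_{ji}$ provided the entries of $\tilde{\mathbf{N}}_{ji}$ follow $\mathcal{TN}(0,\sigma^2;[-t,t])$ with $\sigma = \frac{\alpha^*\Delta}{\sqrt{2\epsilon}}$. Hence it suffices to show that the per-entry aggregate noise standard deviation induced by the per-coefficient choice $\sigma_s^2 = \frac{(\alpha^*)^2 \Delta^2}{2\epsilon T}$ matches this target $\sigma$.

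First, I would recall from \eqref{eq:random_mechanism} that $\tilde{\mathbf{N}}_{ji}=\sum_{k=1}^T \omega_i^k \mathbf{N}_{j,k}$, where all entries of $\mathbf{N}_{j,k}$ are drawn i.i.d.\ from $\mathcal{N}(0,\sigma_s^2)$ and the evaluation points are the roots of unity $\omega_i=\exp(\sqrt{-1}\,\tfrac{2\pi i}{N})$. Next I would invoke the earlier lemma that shows each entry of $\tilde{\mathbf{N}}_{ji}$ is Gaussian (pre-truncation) with variance $\sum_{k=1}^T |\omega_i^k|^2 \sigma_s^2$. The key observation is that $|\omega_i^k|=1$ for every $k$ and $i$, so this sum collapses cleanly to $T\sigma_s^2$.

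I would then equate $T\sigma_s^2$ with the required $\sigma^2 = \frac{(\alpha^*)^2\Delta^2}{2\epsilon}$ from Theorem~\ref{thm:analytical}, which immediately yields $\sigma_s^2 = \frac{(\alpha^*)^2 \Delta^2}{2\epsilon T}$ as claimed. The truncation step to $[-t,t]$ does not alter this identification because the lemma already accounts for it: the resampling procedure forces the realized aggregate noise to be distributed according to $\mathcal{TN}(0,\sigma^2;[-t,t])$ whenever the pre-truncation variance equals $\sigma^2$. Consequently, Theorem~\ref{thm:analytical} applies verbatim to the mechanism induced by sampling with per-coefficient variance $\sigma_s^2$.

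The main subtlety, and the only place that requires care, is justifying that controlling the per-coefficient variance $\sigma_s^2$ is equivalent to controlling the aggregate truncated variance $\sigma^2$ for the purposes of local differential privacy. I would address this by pointing out that the mechanism $\mathcal{M}_{ji}$ depends on $\mathbf{N}_{j,k}$'s only through the aggregate $\tilde{\mathbf{N}}_{ji}$, so the privacy analysis is invariant to any decomposition of this aggregate noise that yields the same truncated distribution; the choice $\sigma_s^2 = \frac{(\alpha^*)^2 \Delta^2}{2\epsilon T}$ combined with $|\omega_i^k|=1$ is precisely the decomposition that recovers the target $\sigma$ from Theorem~\ref{thm:analytical}. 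Everything else is an algebraic identity.
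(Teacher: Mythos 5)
Your proposal is correct and follows essentially the same route as the paper's proof: both reduce the claim to the variance-matching identity $\sum_{k=1}^T |\omega_i^k|^2\sigma_s^2 = T\sigma_s^2 = \sigma^2$ using the fact that the $\omega_i$ are roots of unity, then solve for $\sigma_s^2$ with $\sigma = \frac{\alpha^*\Delta}{\sqrt{2\epsilon}}$ from the analytical truncated Gaussian mechanism. Your additional remark that the mechanism depends on the $\mathbf{N}_{j,k}$'s only through the aggregate $\tilde{\mathbf{N}}_{ji}$, so the privacy analysis transfers verbatim, is a slightly more explicit justification than the paper gives, but it is the same argument.
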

\begin{proof}
The proof can be found in Appendix~\ref{app:submatrices}.
\end{proof}


\begin{remark}
In \cite{copml}, a protocol, referred to as COPML, is proposed to tackle a similar scenario with perfect privacy guarantees, i.e., no information about the individual datasets as well as the intermediate model parameters is leaked in an information-theoretic sense. However, COPML has an upper bound on the maximum number of colluding clients $T$ that it can handle, which is bounded from above by $N/3$. 
In our proposed protocol, this threshold can be increased to as much as $T = N - 1$, demonstrating an improvement in the threshold by a factor of $3$. Note that the types of privacy guarantees are different in our protocol (differential privacy guarantee) versus COPML (perfect information-theoretic guarantee). Another major distinction between our protocol and COPML is that COPML can only be run over finite fields, which, in turn, necessitates quantizing real-valued datasets into finite field elements before running the protocol. On the contrary, our protocol, in theory, runs over real-valued datasets, and, in common practical settings, can be run over datasets represented by floating point numbers.
\end{remark}
\section{Experiments}
\label{sec:exp}
In this section, we present the experimental results of our protocol, employing specific training algorithms over a variety of datasets. The experiments focus on classification and regression tasks conducted in a fully-decentralized setting.
To ensure individual client privacy, we propose $(\epsilon,\delta)$-locally differential private logistic regression and linear regression algorithms based on the proposed \texttt{A-MPC}. 
A comparison is made between the training results obtained from our approach and those achieved through a centralized training without privacy guarantees.
The results demonstrate that by carefully tuning the noise parameters, the protocol achieves $(\epsilon,\delta)$-local differential privacy guarantees while closely following the accuracy of a centralized scheme with no privacy guarantee.



\subsection{Classification}
Building upon \texttt{A-MPC}, as summarized in Algorithm~\ref{Algo:A-MPC}, we propose an algorithm for training a logistic regression model in a $(\epsilon,\delta)$-locally differential private decentralized system for $N$ clients in Algorithms~\ref{algo:LR_AMPC} and \ref{algo:privatemul}. 
Note that logistic regression is an algorithm for binary classification, i.e., there are only two classes in the datasets. 

\begin{algorithm}[t]
\caption{$(\epsilon,\delta)$-Locally Differential Private Logistic Regression based on \texttt{A-MPC}}
\begin{algorithmic}[1]
\label{algo:LR_AMPC}
\REQUIRE{Privacy requirement ($\epsilon$,\,$\delta$), number of clients $N$, number of colluding clients $T$, public parameters $\omega_j$'s for $j\in[N]$, learning rate $\gamma$, iterations $J$, batch size $B$.}
\renewcommand{\algorithmicrequire}{\textbf{Input:}}
\renewcommand{\algorithmicensure}{\textbf{Output:}}
\REQUIRE{Datasets $\mathbf{X}_j$'s, label vectors $\mathbf{y}_j$'s, for clients $j\in[N]$.}
\ENSURE  Model weight vector $\mathbf{w}^{(J)}$.
\STATE Initialize model weight vector $\mathbf{w}^{(0)}$ randomly.
\STATE Secretly share $\mathbf{w}^{(0)}$ with clients $i\in[N]$.\\
// Client $i$ holds $[\mathbf{w}^{(0)}]_i$, for $i\in[N]$.

\FOR{$t=0,\dots,J-1$}
    
    \FOR {client $j=1,\dots,N$}
        \STATE Client $j$ randomly chooses $B$ datapoints, referred to $\mathbf{X}_j^B$'s and $\mathbf{y}_j^B$'s.
        \STATE Client $j$ secretly shares its partial individual dataset  $(\mathbf{X}_j^B,\mathbf{y}_j^B)$ with clients $i\in[N]$.
    \ENDFOR \\
    // Client $i$ holds $([\mathbf{X}_j^B]_i,[\mathbf{y}_j^B]_i)$ sent from client $j$, for $i,j\in[N]$.
    \STATE Client $i$ concatenates $\{[\mathbf{X}_j^B]_i\}_{j\in[N]}$  to $[\mathbf{X}^B]_i$, and $\{[\mathbf{y}_j^B]_i\}_{j\in[N]}$ to $[\mathbf{y}^B]_i$, for $i\in[N]$.

    \STATE All clients collaborate to compute $[\mathbf{X}^B\mathbf{w}^{(t)}]_i$ at client $i$, based on $[\mathbf{X}^B]_i$ and $[\mathbf{w}^{(t)}]_i$, for $i\in[N]$, according to $\{[\mathbf{X}^B\mathbf{w}^{(t)}]_i\}_{i\in[N]}=\mathrm{PrivateMul}(\{[\mathbf{X}^B]_i,[\mathbf{w}^{(t)}]_i\}_{i\in[N]})$.   
    \FOR{client $i=1,\dots,N$}
        \STATE Client $i$ computes $[\hat{g}(\mathbf{X}^B\mathbf{w}^{(t)})]_i=\frac{\mathbf{1}_B}{2}+\frac{1}{4}[\mathbf{X}^B\mathbf{w}^{(t)}]_i$.
        \STATE Client $i$ computes $[\mathbf{e}]_i=[\hat{g}(\mathbf{X}^B\mathbf{w}^{(t)})]_i-[\mathbf{y}^B]_i$.
    \ENDFOR

    \STATE All clients collaborate to compute $[(\mathbf{X}^B)^{\top}\mathbf{e}]_i$, at client $i$, based on $[\mathbf{X}^B]_i^\top$ and $[\mathbf{e}]_i$, for $i\in[N]$, according to $\{[(\mathbf{X}^B)^{\top}\mathbf{e}]_i\}_{i\in[N]}=\mathrm{PrivateMul}(\{[\mathbf{X}^B]_i^\top,[\mathbf{e}]_i\}_{i\in[N]})$. 
    \FOR{client $i=1,\dots,N$}
        \STATE $[\mathbf{w}^{(t+1)}]_i = [\mathbf{w}^{(t)}]_i - \frac{\gamma}{NB}[(\mathbf{X}^B)^{\top}\mathbf{e}]_i$.
    \ENDFOR
\ENDFOR
\FOR{client $j=1,\dots,N$}
    \STATE Collect at least $T+1$ secret shares from $\{[\mathbf{w}^{(J)}]_i\}_{i\in[N]}$ to reconstruct the trained model weight vector $\mathbf{w}^{(J)}$.
\ENDFOR
 \end{algorithmic}
 \end{algorithm}
\begin{algorithm}[t]
\label{algo:privatemul}
\caption{$\mathrm{PrivateMul}(\{[\mathbf{U}]_i,[\mathbf{V}]_i\}_{i\in[N]})$}
\begin{algorithmic}[1]
\label{algo:privatemul}
\renewcommand{\algorithmicrequire}{\textbf{Input:}}
\renewcommand{\algorithmicensure}{\textbf{Output:}}
\REQUIRE $\{[\mathbf{U}]_i,[\mathbf{V}]_i\}_{i\in[N]}$ 
\ENSURE  $\{[\mathbf{UV}]_i\}_{i\in[N]}$
\FOR{client $i=1,\dots,N$}
    \FOR{client $j=1,\dots,N$}
        \IF{$i=j$}
            \STATE Client $i$ computes $[\mathbf{U}]_i\times [\mathbf{V}]_i$ and secret shares to client $k$ for $k\in[N]$.\\
            // Client $k$ receives a secret share $[[\mathbf{U}]_i\times [\mathbf{V}]_i]_k$.
        \ELSE
            \STATE Use Algorithm~\ref{Algo:analogmul} to compute $[[\mathbf{U}]_i \times[\mathbf{V}]_j]_k$ for clients $k\in[N]$, where $[\mathbf{U}]_i$ and $ [\mathbf{V}]_j$ are the inputs for Algorithm~\ref{Algo:analogmul}. \\
            // Client $k$ receives a secret share $[[\mathbf{U}]_i \times[\mathbf{V}]_j]_k$, for $k\in[N]$.
        \ENDIF
    \ENDFOR
\ENDFOR\\
// Client $k$ holds secret shares $[[\mathbf{U}]_i \times[\mathbf{V}]_j]_k$ for $i,j,k\in[N]$.
\FOR{client $k=1,\dots,N$}
    \STATE Client $k$ computes  $\frac{1}{N^2}\sum_{i=1}^{N}\sum_{j=1}^{N}[[\mathbf{U}]_i\times[\mathbf{V}]_j]_k$ to obtain $[\mathbf{U}\mathbf{V}]_k$.
\ENDFOR
\end{algorithmic}
\end{algorithm}

Consider a fully decentralized system with $N$ clients. Given datasets $\mathbf{X}_j\in\mathbb{R}^{m\times n}$'s, denoting that client $j$ holds a dataset of $m$ samples and $n$ features, for $j \in [N]$. 
Then, the total number of samples held by all clients is $Nm$ and we denote the original dataset, before splitting, as $\mathbf{X}\in\mathbb{R}^{Nm\times n}$. 
Let $\mathbf{y}_j\in\{0,1\}^{m}$ denote the corresponding label vector for the dataset $\mathbf{X}_j$ held by client $j$, for $j\in[N]$.
We denote the original label vector as $\mathbf{y}\in\{0,1\}^{Nm}$, for $j \in [N]$. 
The goal is to compute the weight vectors of the model by iteratively minimizing the cross-entropy function using the following equation to update the weight vectors:
\begin{equation}
\label{eq:logisticR_weight_update}
\mathbf{w}^{(t+1)}=\mathbf{w}^{(t)}-\frac{\gamma}{N}\mathbf{X}^\top (g(\mathbf{X}\mathbf{w}^{(t)})-\mathbf{y}),
\end{equation}
where $\mathbf{w}^{(t)}\in\mathbb{R}^{n}$ represents the estimated weight vector in iteration $t$, $\gamma$ is the learning rate, and $g(x)=\frac{1}{1+\exp(-x)}$ is the sigmoid function that performs element-wise operations on the inputs.



Algorithms \ref{algo:LR_AMPC} and \ref{algo:privatemul} are tailored for training a logistic regression model using the proposed \texttt{A-MPC}, which offers $(\epsilon,\delta)$-local differential privacy guarantees.
In Algorithm \ref{algo:LR_AMPC}, a weight vector $\mathbf{w}^{(0)}$ is randomly initialized, and then secretly shared to all $N$ clients. Thus, client $i$ holds $[\mathbf{w}^{(0)}]_i$, for $i\in[N]$.
Then, in each iteration of the training, each client randomly chooses $B$ datapoints in a batch as $\mathbf{X}_j^B$'s and $\mathbf{y}_j^B$'s, for $j\in[N]$. 
All clients secretly share their chosen datasets with all other clients including themselves. 
At the end of the secret sharing stage, all clients concatenate the received secret shares, denoted by $[\mathbf{X}^B]_i$ and $[\mathbf{y}^B]_i$, for $i\in[N]$. 
To compute the multiplication of $[\mathbf{X}^B]_i$ and $[\mathbf{w}^{(t)}]_i$ as $[\mathbf{X}^B \mathbf{w}^{(t)}]_i$, for $i\in[N]$, while keeping all included datasets and model parameters private, we propose Algorithm \ref{algo:privatemul}. The detailed descriptions of Algorithm \ref{algo:privatemul} are moved to Appendix~\ref{app:privatemul}. 
Steps $10$ through $17$ involve updating the model parameters, following the equation \eqref{eq:logisticR_weight_update}.
Note that the sigmoid function in~\eqref{eq:logisticR_weight_update}, $g(x)$, is substituted by its degree-$1$ polynomial approximation as $\hat{g}(x)= \frac{1}{2}+\frac{x}{4}$ during the training. This procedure is repeated until the given number of iterations $J$ is reached. The last update is the final result for the logistic regression model as a weight vector $\mathbf{w}^{(J)}$.

We train the logistic regression model over datasets such as MNIST on the digits $2$ and $6$ \cite{mnist}, Titanic \cite{titanic} and Cleveland Heart Disease \cite{Dua:2019} referred to as Heart. 
In each dataset, we split the samples equally to all clients. 
Consider datasets $\mathbf{X}_j\in\mathbb{R}^{m\times n}$, denoting that each client holds a dataset of $m$ samples and $n$ features, for $j \in [N]$. 
Then, the total number of samples held by all clients is $Nm$ and we denote the original dataset, before splitting, as $\mathbf{X}\in\mathbb{R}^{Nm\times n}$. We have: (1) $(Nm,n)=(10000,784)$ for MNIST; (2) $(Nm,n)=(200,13)$ for Heart; and (3) $(Nm,n)=(1000,26)$ for Titanic.  

Figure~\ref{fig:acc_two} shows the accuracy for the three datasets based on the proposed differentially private logistic regression. Figure~\ref{fig:acc_a}, Fig.~\ref{fig:acc_b}, and Fig.~\ref{fig:acc_f} show the numerical results of the experiments for the datasets with $N=2$ and $T=1$. We consider three different settings for the standard deviation of the added noises based on our proposed protocol for each individual dataset.
One may observe that for the smallest value of $\sigma$, i.e., the results in black curves, both datasets can follow the accuracy of the centralized training protocol closely with a negligible loss. 
In contrast, as $\sigma$ grows larger, the accuracy for both datasets suffers performance degradation. In the largest $\sigma$ we pick for both datasets, the accuracy approaches random guessing, as expected. Therefore, there exists a trade-off between accuracy and privacy, which one may make the decision based on the required privacy guarantee. 
The specific values of the local differential privacy parameters, namely $\epsilon$ and $\delta$, are indicated in the captions corresponding to the plots presented in Fig.\,\ref{fig:acc_two}.

For the experiments with a larger number of clients, i.e., $N$, also follow the results. Figure~\ref{fig:acc_c}, Fig.~\ref{fig:acc_d}, and Fig.~\ref{fig:acc_e} show the experiments of the Heart dataset with $N=4$, while having different numbers of colluding clients $T=1$, $T=2$, and $T=3$, respectively. Figure~\ref{fig:acc_g} and Fig.~\ref{fig:acc_h} show the experiments of the Titanic dataset with $N=10$, while having different numbers of colluding clients $T=1$ and $T=9$, respectively. The experiments show that with a larger $N$ or $T$, the datasets can still align the accuracy with the centralized setting. Therefore, by setting the noise parameters carefully, the protocol provides $(\epsilon,\delta)$-local differential privacy guarantees at the cost of negligible accuracy loss.  
\begin{figure*}[t]
\captionsetup{justification=centering}
\centering
\subfloat[\footnotesize MNIST: $(N,T)=(2,1)$, $(\epsilon,\delta)\approx(10^{-2},10^{-5})$]{%
\includegraphics[trim=25 2 30 10,clip,width=0.23\textwidth]{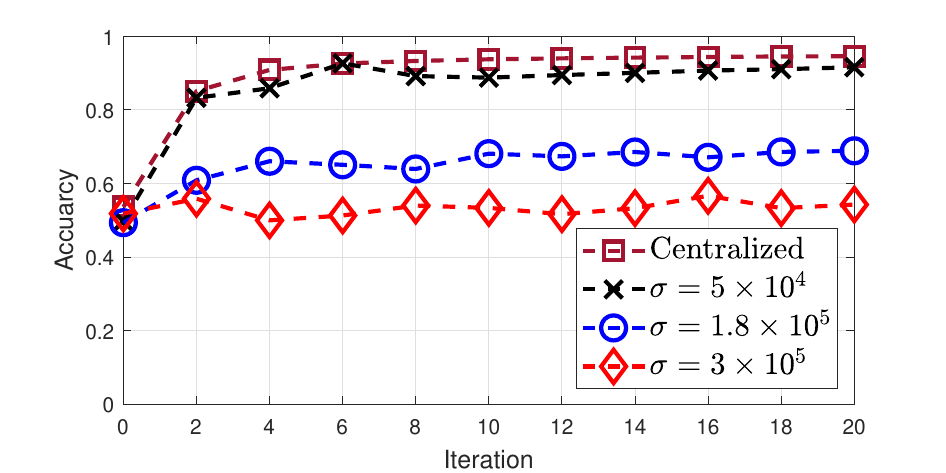}\label{fig:acc_a}}\hfill
\subfloat[{\footnotesize Heart: $(N,T)=(2,1)$, $(\epsilon,\delta)\approx(10^{-3},10^{-8})$}]{
\includegraphics[trim=25 2 30 10,clip,width=0.23\textwidth]{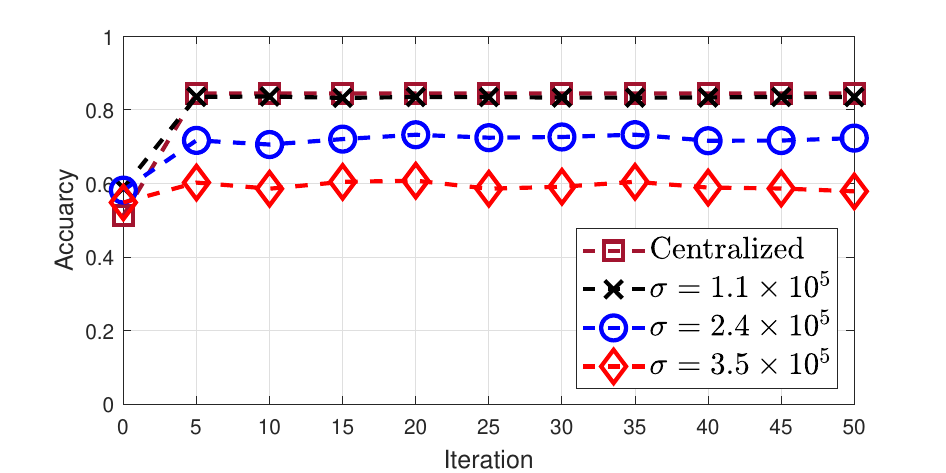}\label{fig:acc_b}}\hfill
\subfloat[\footnotesize Heart: $(N,T)=(4,1)$, $(\epsilon,\delta)\approx(10^{-3},10^{-8})$]{%
\includegraphics[trim=25 2 30 10,clip,width=0.23\textwidth]{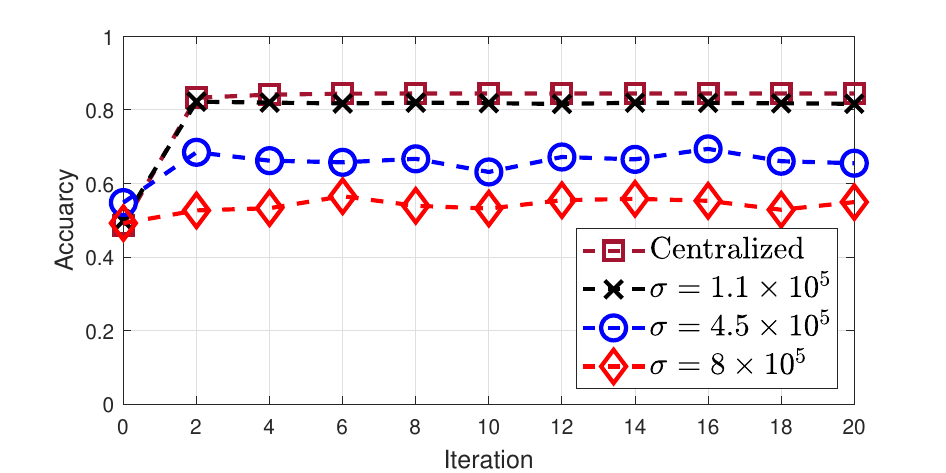}\label{fig:acc_c}}\hfill
\subfloat[\footnotesize Heart: $(N,T)=(4,2)$, $(\epsilon,\delta)\approx(10^{-3},10^{-8})$]{%
\includegraphics[trim=25 2 30 10,clip,width=0.23\textwidth]{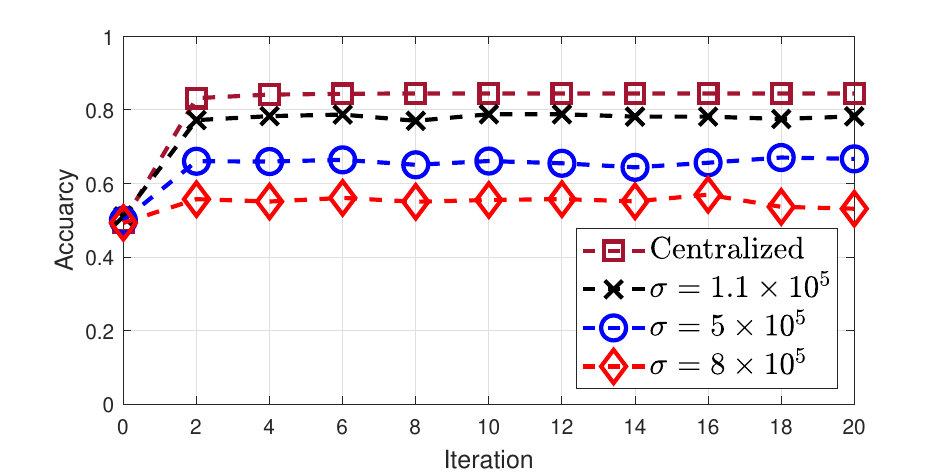}\label{fig:acc_d}}
\\
\subfloat[\footnotesize Heart: $(N,T)=(4,3)$, $(\epsilon,\delta)\approx(10^{-3},10^{-8})$]{%
\includegraphics[trim=25 2 30 10,clip,width=0.23\textwidth]{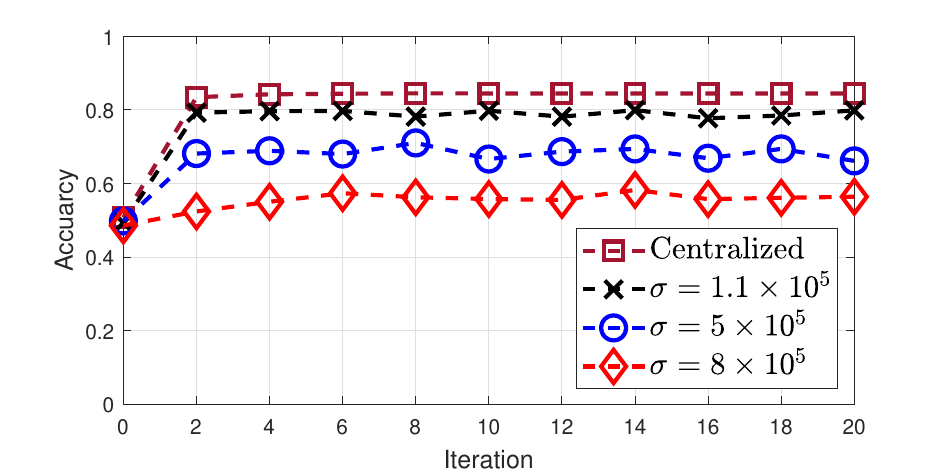}\label{fig:acc_e}}\hfill
\subfloat[\footnotesize Titanic: $(N,T)=(2,1)$, $(\epsilon,\delta)\approx(5\times10^{-3},10^{-8})$]{%
\includegraphics[trim=25 2 30 10,clip,width=0.23\textwidth]{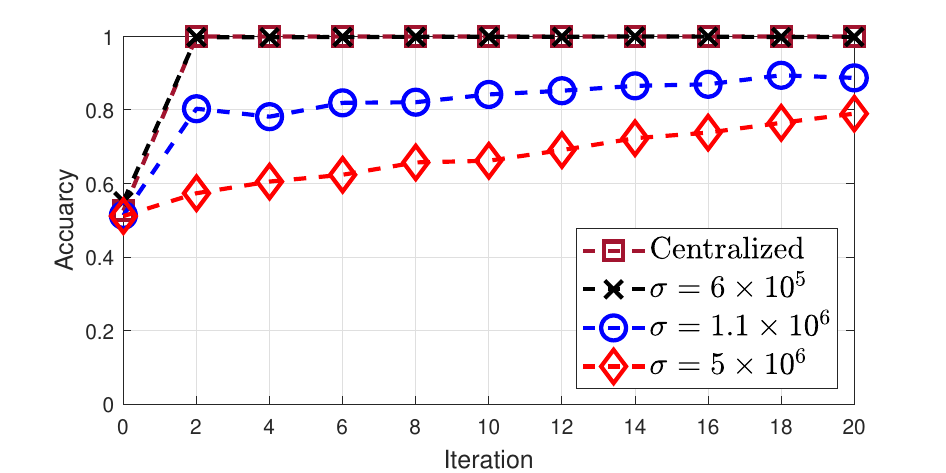}\label{fig:acc_f}}\hfill
\subfloat[\footnotesize Titanic: $(N,T)=(10,1)$, $(\epsilon,\delta)\approx(5\times10^{-3},10^{-8})$]{%
\includegraphics[trim=25 2 30 10,clip,width=0.23\textwidth]{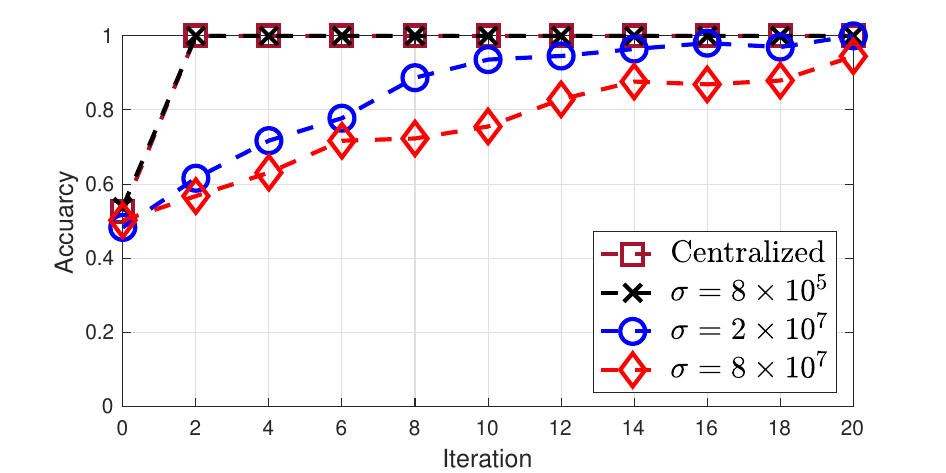}\label{fig:acc_g}}\hfill
\subfloat[\footnotesize Titanic: $(N,T)=(10,9)$, $(\epsilon,\delta)\approx(5\times10^{-3},10^{-8})$]{\includegraphics[trim=25 2 30 10,clip,width=0.23\textwidth]{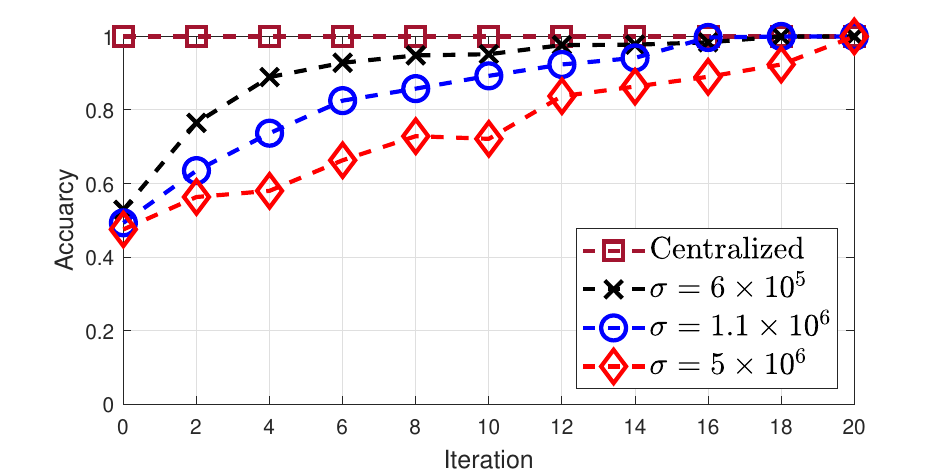}\label{fig:acc_h}}
\\
\caption{Accuracy for the datasets based on the proposed differentially private logistic regression training in the analog domain}\label{fig:acc_two}
\end{figure*}

\subsection{Regression}

Building upon \texttt{A-MPC}, as summarized in Algorithm~\ref{Algo:A-MPC}, we propose an algorithm for training a linear regression model in a $(\epsilon,\delta)$-locally differential private decentralized system for $N$ clients in Algorithms~\ref{algo:privatemul} and \ref{algo:LinearR_AMPC}.

Consider a fully decentralized system with $N$ clients. Given datasets $\mathbf{X}_j\in\mathbb{R}^{m\times n}$'s, denoting that client $j$ holds a dataset of $m$ samples and $n$ features, for $j \in [N]$. 
Then, the total number of samples held by all clients is $Nm$ and we denote the original dataset, before splitting, as $\mathbf{X}\in\mathbb{R}^{Nm\times n}$. 
Let $\mathbf{y}_j\in\mathbb{R}^{m}$ denote the corresponding label vector for the dataset $\mathbf{X}_j$ held by client $j$, for $j\in[N]$.
\begin{algorithm}[t]
\caption{$(\epsilon,\delta)$-Locally Differential Private Linear Regression based on \texttt{A-MPC}}
\begin{algorithmic}[1]
\label{algo:LinearR_AMPC}
\REQUIRE{Privacy requirement ($\epsilon$,$\delta$), number of clients $N$, number of colluding clients $T$, public parameters $\omega_j$'s for $j\in[N]$, learning rate $\gamma$, iterations $J$, batch size $B$.}
\renewcommand{\algorithmicrequire}{\textbf{Input:}}
\renewcommand{\algorithmicensure}{\textbf{Output:}}
\REQUIRE{Datasets $\mathbf{X}_j$'s, label vectors $\mathbf{y}_j$'s, for clients $j\in[N]$.}
\ENSURE  Model weight vector $\mathbf{w}^{(J)}$.
\STATE Initialize model weight vector $\mathbf{w}^{(0)}$ randomly.
\STATE Secret share $\mathbf{w}^{(0)}$ with client $i$, for $i\in[N]$. \\// Client $i$ holds $[\mathbf{w}^{(0)}]_i$, for $i\in[N]$.

\FOR {client $j=1,\dots,N$}
        \STATE Client $j$ concatenates an all-one vector $\mathbf{1}_{m}$ with its dataset $ \mathbf{X}_j$ as $\mathbf{X}_j^{\mathrm{cat}}=[\mathbf{1}_{m},\mathbf{X}_j]$.
    \ENDFOR \\

\FOR{$t=0,\dots,J-1$}
    
    \FOR {client $j=1,\dots,N$}
        \STATE Client $j$  randomly chooses $B$ datapoints from $\mathbf{X}_j^{\mathrm{cat}}$, referred to $\mathbf{X}_j^B$'s and $\mathbf{y}_j^B$'s.
        \STATE Client $j$ secretly shares its partial individual dataset  $(\mathbf{X}_j^B,\mathbf{y}_j^B)$ with clients $i\in[N]$.
    \ENDFOR \quad // Client $i$ holds $([\mathbf{X}_j^B]_i,[\mathbf{y}_j^B]_i)$ sent from client $j$, for $i,j\in[N]$.
    \STATE Client $i$ concatenates $\{[\mathbf{X}_j^B]_i\}_{j\in[N]}$  to $[\mathbf{X}^B]_i$, and $\{[\mathbf{y}_j^B]_i\}_{j\in[N]}$ to $[\mathbf{y}^B]_i$, for $i\in[N]$.

    \STATE All clients collaborate to compute $[\mathbf{X}^B\mathbf{w}^{(t)}]_i$ at client $i$, based on $[\mathbf{X}^B]_i$ and $[\mathbf{w}^{(t)}]_i$, for $i\in[N]$, according to $\{[\mathbf{X}^B\mathbf{w}^{(t)}]_i\}_{i\in[N]}=\mathrm{PrivateMul}(\{[\mathbf{X}^B]_i,[\mathbf{w}^{(t)}]_i\}_{i\in[N]})$.\\
    // Client $i$ holds $[\mathbf{X}^B\mathbf{w}^{(t)}]_i$ and $[\mathbf{y}^B]_i$, for $i\in[N]$.   
    \FOR{client $i=1,\dots,N$}
        \STATE Client $i$ computes $[\mathbf{e}]_i=[\mathbf{X}^B\mathbf{w}^{(t)}]_i-[\mathbf{y}^B]_i$.
    \ENDFOR

    \STATE All clients collaborate to compute $[(\mathbf{X}^B)^{\top}\mathbf{e}]_i$ at client $i$, based on $[(\mathbf{X}^B)^{\top}]_i$ and $[\mathbf{e}]_i$, for $i\in[N]$, according to $\{[(\mathbf{X}^B)^{\top}\mathbf{e}]_i\}_{i\in[N]}=\mathrm{PrivateMul}(\{[\mathbf{X}^B]_i^\top,[\mathbf{e}]_i\}_{i\in[N]})$.\\
    // Client $i$ holds $[\mathbf{w}^{(t)}]_i$ and $[(\mathbf{X}^B)^{\top}\mathbf{e}]_i$, for $i\in[N]$.
    
    \FOR{client $i=1,\dots,N$}
        \STATE $[\mathbf{w}^{(t+1)}]_i = [\mathbf{w}^{(t)}]_i - \frac{\gamma}{NB}[(\mathbf{X}^B)^{\top}\mathbf{e}]_i$.
    \ENDFOR
\ENDFOR
\FOR{client $j=1,\dots,N$}
    \STATE Collect at least $T+1$ secret shares from $\{[\mathbf{w}^{(J)}]_i\}_{i\in[N]}$ to reconstruct $\mathbf{w}^{(J)}$.
\ENDFOR
 \end{algorithmic}
 \end{algorithm}
We denote the original label vector as $\mathbf{y}\in\mathbb{R}^{Nm}$, for $j \in [N]$. 
The goal is to compute the weight vectors of the model by iteratively minimizing the function using the following equation to update the weight vectors:
\begin{equation}
\label{eq:linearR_weight_update}
\mathbf{w}^{(t+1)}=\mathbf{w}^{(t)}-\frac{\gamma}{N}\mathbf{X}^\top (\mathbf{X}\mathbf{w}^{(t)}-\mathbf{y}),
\end{equation}
where $\mathbf{w}^{(t)}\in\mathbb{R}^{n}$ is the estimated weight vector in iteration $t$, $\gamma$ is the learning rate.

We describe the steps in the experiment for training a model for linear regression. 
Algorithms \ref{algo:privatemul} and \ref{algo:LinearR_AMPC} are proposed for training a linear regression model based on the proposed \texttt{A-MPC} that provides $(\epsilon,\delta)$-local differential privacy guarantees. 
In Algorithm \ref{algo:LinearR_AMPC}, a weight vector $\mathbf{w}^{(0)}$ is randomly initialized, and then secretly shared to $N$ clients. 
Thus, client $i$ holds $[\mathbf{w}^{(0)}]_i$, for $i\in[N]$.
Then, in each iteration of the training,  client $j$ concatenates an all-one vector $\mathbf{1}_{m}$ with its dataset $ \mathbf{X}_j$ as $\mathbf{X}_j^{\mathrm{cat}}=[\mathbf{1}_{m},\mathbf{X}_j]$, for $j\in[N]$. 
Then, randomly chooses $B$ datapoints from $\mathbf{X}_j^{\mathrm{cat}}$ in a batch as $\mathbf{X}_j^B$'s and $\mathbf{y}_j^B$'s, for $j\in[N]$. 
All clients secretly share their chosen datasets with all other clients including themselves. 
At the end of the secret sharing stage, all clients concatenate the received secret shares, denoted by $[\mathbf{X}^B]_i$ and $[\mathbf{y}^B]_i$, for $i\in[N]$. 
To compute the multiplication of $[\mathbf{X}^B]_i$ and $[\mathbf{w}^{(t)}]_i$ as $[\mathbf{X}^B \mathbf{w}^{(t)}]_i$, for $i\in[N]$, while keeping all included datasets and model parameters secret, we use Algorithm \ref{algo:privatemul}. 
Steps $10$ through $17$ involve updating the model parameters, following the equation \eqref{eq:logisticR_weight_update}.
The last update is the weight vector for the linear regression model denoted by $\mathbf{w}^{(J)}$.

We train the linear regression model over datasets such as Combined Cycle Power Plant \cite{Dua:2019} referred to as CCPP, Red Wine Quality \cite{Dua:2019} referred to as Wine, Real Estate Price Prediction \cite{estate} referred to as Estate, and Tesla stock data from 2010 to 2020 \cite{tesla} referred to as Tesla. 
We distribute the samples evenly among all clients within each dataset. We have: (1) $(Nm,n)=(9568,4)$ for CCPP; (2) $(Nm,n)=(2000,12)$ for Wine; (3) $(Nm,n)=(414,6)$ for Estate; and (4) $(Nm,n)=(2417,5)$ for Tesla.
The performance is evaluated by relative error, which is defined by $e_{\mathrm{rel}}=\frac{||\mathbf{y}-\hat{\mathbf{y}}||}{||\mathbf{y}||}$, where $\mathbf{y}$ is the vector of true labels and $\hat{\mathbf{y}}$ is the vector of predicted labels. 

Figure~\ref{fig:re} shows the accuracy for the datasets based on the proposed $(\epsilon,\delta)$-locally differential private linear regression. Figure~\ref{fig:re_a}, Fig.~\ref{fig:re_b}, Fig.~\ref{fig:re_c}, and Fig.~\ref{fig:re_d} show the numerical results of the experiments for the datasets with $N=2$ and $T=1$. Similar to the logistic regression experiments, for the smallest value of $\sigma$, all datasets can follow the accuracy of the centralized training protocol closely with a negligible loss. 

\begin{figure*}[t]
\captionsetup{justification=centering}
\centering
\subfloat[\footnotesize CCPP: $(N,T)=(2,1)$, $(\epsilon,\delta)\approx(10^{-4},10^{-8})$]{%
\includegraphics[trim=25 2 30 10,clip,width=0.23\textwidth]{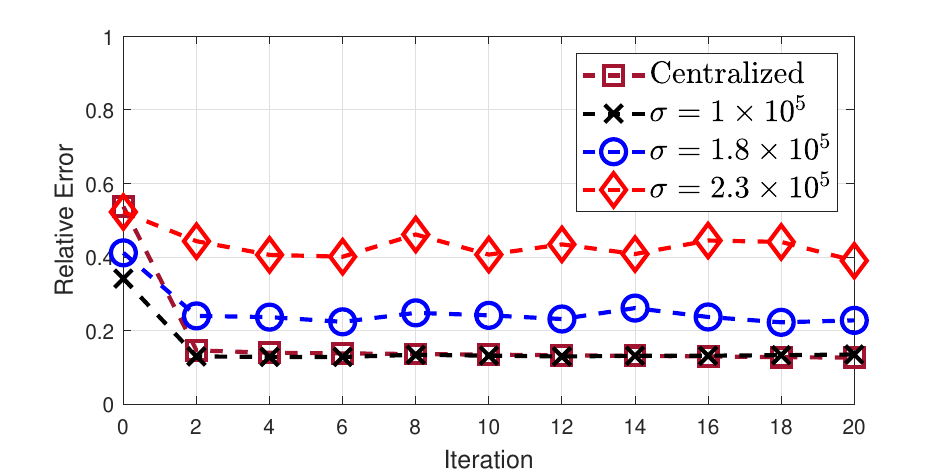}\label{fig:re_a}}\hfill
\subfloat[{\footnotesize Wine: $(N,T)=(2,1)$, $(\epsilon,\delta)\approx(2\times10^{-4},10^{-8})$}]{
\includegraphics[trim=25 2 30 10,clip,width=0.23\textwidth]{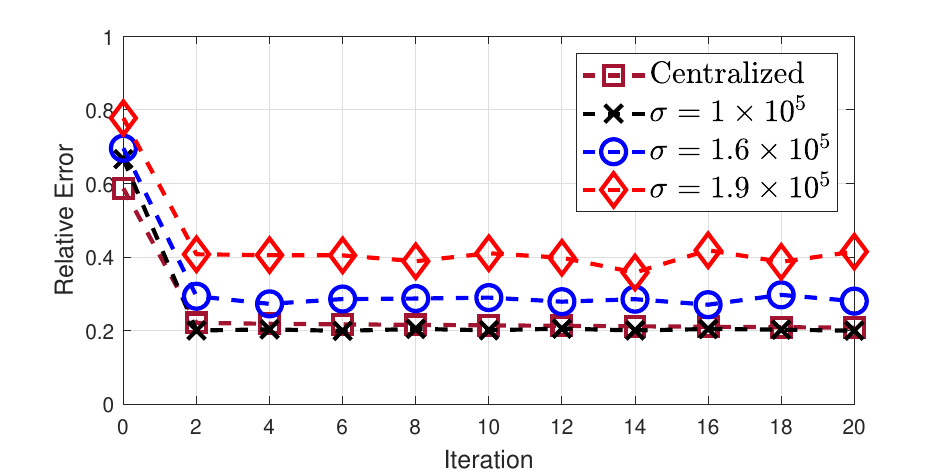}\label{fig:re_b}}\hfill
\subfloat[\footnotesize Estate: $(N,T)=(2,1)$, $(\epsilon,\delta)\approx(10^{-4},10^{-8})$]{%
\includegraphics[trim=25 2 30 10,clip,width=0.23\textwidth]{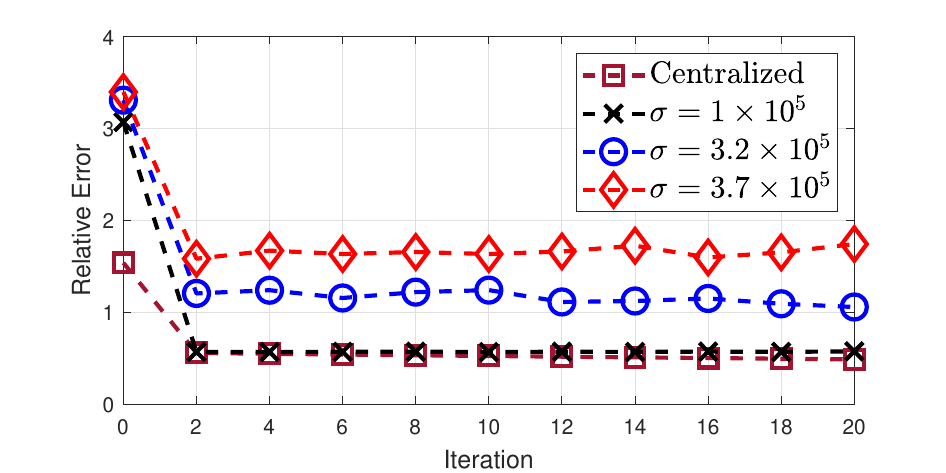}\label{fig:re_c}}\hfill
\subfloat[\footnotesize Tesla: $(N,T)=(2,1)$, $(\epsilon,\delta)\approx(3\times10^{-4},10^{-8})$]{%
\includegraphics[trim=25 2 30 10,clip,width=0.23\textwidth]{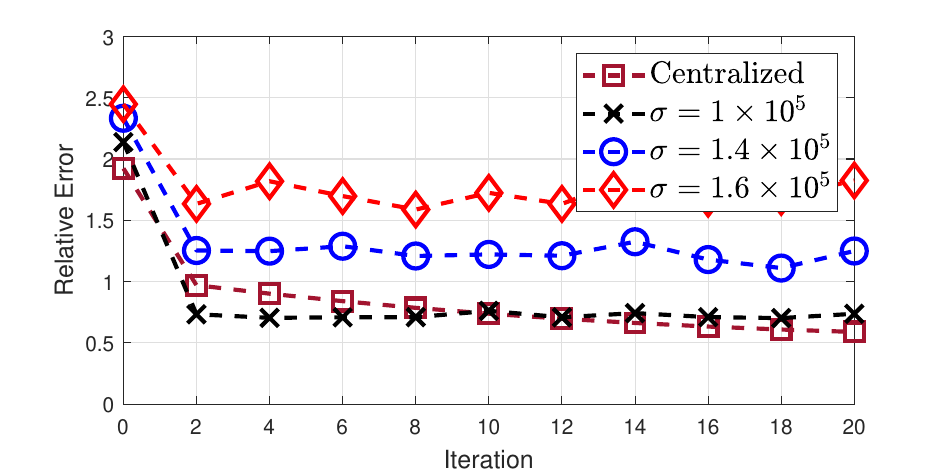}\label{fig:re_d}}
\\
\subfloat[\footnotesize CCPP: $(N,T)=(4,1)$, $(\epsilon,\delta)\approx(10^{-4},10^{-8})$]{%
\includegraphics[trim=25 2 30 10,clip,width=0.23\textwidth]{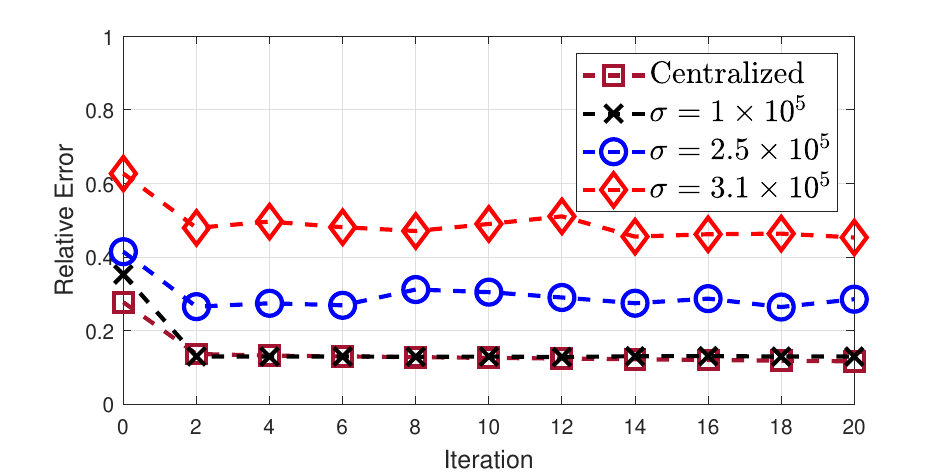}\label{fig:re_e}}\hfill
\subfloat[\footnotesize CCPP: $(N,T)=(4,3)$, $(\epsilon,\delta)\approx(10^{-4},10^{-8})$]{%
\includegraphics[trim=25 2 30 10,clip,width=0.23\textwidth]{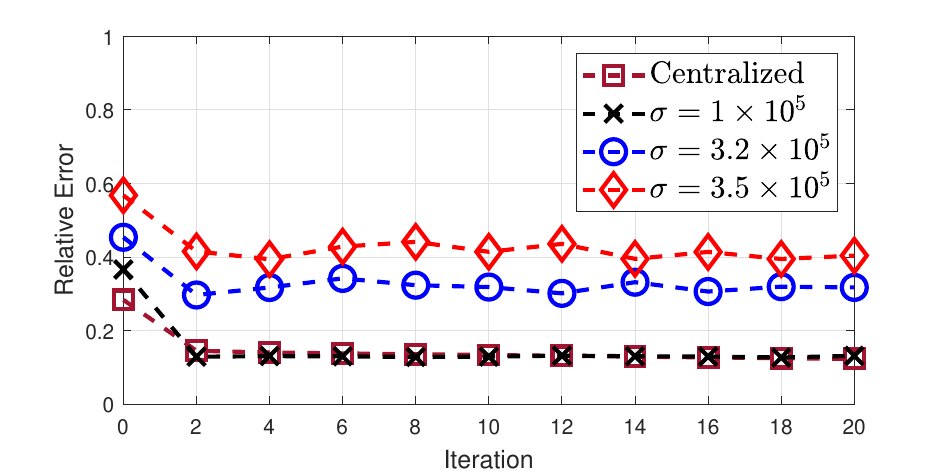}\label{fig:re_f}}\hfill
\subfloat[\footnotesize Wine: $(N,T)=(10,1)$, $(\epsilon,\delta)\approx(2\times10^{-4},10^{-8})$]{%
\includegraphics[trim=25 2 30 10,clip,width=0.23\textwidth]{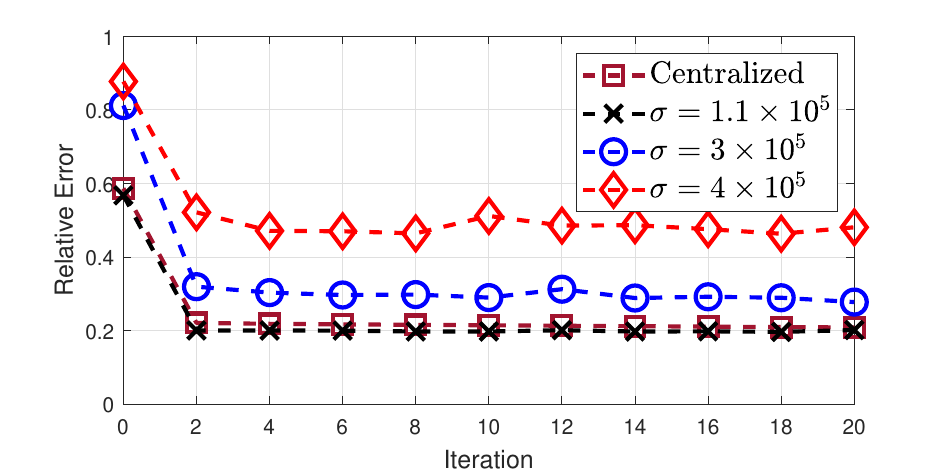}\label{fig:re_g}}\hfill
\subfloat[\footnotesize Wine: $(N,T)=(10,9)$, $(\epsilon,\delta)\approx(2\times10^{-4},10^{-8})$]{\includegraphics[trim=25 2 30 10,clip,width=0.23\textwidth]{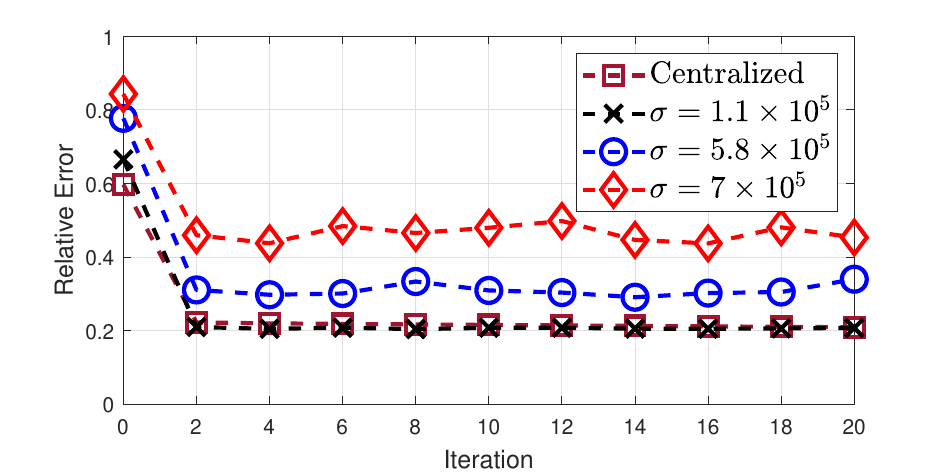}\label{fig:re_h}}\\
\caption{Relative error for the datasets based on the proposed differentially private linear regression training in the analog domain}
\label{fig:re}
\end{figure*}

Experiments on the CCPP dataset with $N=4$ are depicted in Figure~\ref{fig:re_e} and Fig.~\ref{fig:re_f}, showcasing scenarios with varying numbers of colluding clients ($T=1$ and $T=3$, respectively).
Experiments on the Wine dataset with $N=10$ are depicted in Figure~\ref{fig:re_g} and Fig.~\ref{fig:re_h}, illustrating scenarios with varying numbers of colluding clients ($T=1$ and $T=9$, respectively).
The results indicate that by increasing either $N$ or $T$, the datasets are able to closely match the accuracy of the centralized setting, incurring only a negligible loss in accuracy.
For small values of $\sigma$, the performance over both datasets closely tracks the accuracy of the centralized training protocol.
 In contrast, with the increase of $\sigma$, the accuracy of all datasets undergoes a decline in performance.
Therefore, a trade-off arises between accuracy and privacy, leading to decisions one has to make based on the desired privacy guarantee.
The specific values of the local differential privacy parameters, namely $\epsilon$ and $\delta$, are indicated in the captions corresponding to the plots presented in Fig.\,\ref{fig:re}.
\section{Conclusion}
\label{sec:CCS}
In this paper, we have proposed a fully distributed protocol \texttt{A-MPC} which allows computations in the analog domain among multiple clients without a trusted center while keeping the datasets of each client private. 
The proposed protocol includes \emph{addition}, \emph{multiply-by-a-constant}, and \emph{multiplication}, where the multiplication computation requires two phases of computations, i.e., \emph{offline phase} and \emph{online phase}. We have derived an upper bound on the perturbation considering a finite representation of the results.
As perfect privacy can no longer be guaranteed with MPC in the analog domain, we analyze the proposed protocol through the lens of local differential privacy.
With a formulated optimization problem based on given constraints, we may ensure the proposed \texttt{A-MPC} is $(\epsilon,\delta)$-locally differential private.  
Furthermore, we have proposed collaborative machine learning algorithms for training logistic regression and linear regression models based on our proposed \texttt{A-MPC} protocol. 
The experimental results demonstrate that the accuracy of the collaborative machine learning algorithms proposed in this paper, trained on real-world datasets, closely follows that of the centralized training conducted by a single trusted entity, all while maintaining the privacy of local datasets.


\appendix
\subsection{Proof of Theorem~\ref{thm:highdim}}
\label{app:highdim}
\begin{proof}
To simplify the notation, we replace $\mathbf{X}_j$ by $\mathbf{S}$ and replace $\mathbf{X}_j^\prime$ by $\mathbf{S}^\prime$, i.e., $\mathbf{S}=\mathbf{X}_j$ and $\mathbf{S}^\prime=\mathbf{X}_j^\prime$. We reformulate the matrices $\mathbf{S}$, $\mathbf{S}^\prime$, and $\mathbf{W}$ by vectorization, to $\mathrm{vec}(\mathbf{S})$, $\mathrm{vec}(\mathbf{S}^\prime)$, and $\mathrm{vec}(\mathbf{W})$, respectively, where $\mathrm{vec}(\mathbf{S}),\mathrm{vec}(\mathbf{S}^\prime),\mathrm{vec}(\mathbf{W})\in\mathbb{R}^{mn}$. The relation $\mathbf{S}=\mathbf{S}^\prime+\mathbf{W}$ becomes $\mathrm{vec}(\mathbf{S})=\mathrm{vec}(\mathbf{S}^\prime)+\mathrm{vec}(\mathbf{W})$. Also, we have $\lVert\mathbf{W}\rVert_F=\lVert\mathrm{vec}(\mathbf{W})\rVert\leq\Delta$. Let $\mathbf{s}=\mathrm{vec}(\mathbf{S})$.

Given an ordered basis $\mathcal{B}=\{\mathbf{b}_1,\mathbf{b}_2,...,\mathbf{b}_{mn}\}$, for every $\mathbf{s}\in\mathbb{R}^{mn}$, there is a unique linear combination of the basis vectors that is equal to $\mathbf{s}$. The representation of $\mathbf{s}$ in  $\mathcal{B}$ basis is the sequence of coordinates $[\mathbf{s}]_{\mathcal{B}}=[\eta_1,\eta_2,...\eta_{mn}]$, where $\eta_l\sim\mathcal{TN}(0,\sigma^2;[-t,t])$, for $l\in[mn]$. 
Define $\mathbf{s}^{[l]}\deff \eta_l\mathbf{b}_l$, then we have $\mathbf{s}=\sum_{l=1}^{mn}\mathbf{s}^{[l]}$.
The PDFs of the neighboring datasets are characterized as 
\begin{equation}
\label{eq:pdfampc1}
\begin{aligned}
p_{\mathcal{M}_{ji}(\mathbf{S})}(\mathbf{s})&=\frac{\frac{1}{(\sqrt{2\pi})^{\frac{mn}{2}}|\mathbf{\Sigma}|^{\frac{1}{2}}}\exp({-\frac{1}{2}}\mathbf{s}^T \mathbf{\Sigma}^{-1}\mathbf{s})}{2\Phi(\frac{t}{\sigma})-1}\cdot\mathbb{I}_{[-t,t]}(\mathbf{s})\\
&=\frac{\frac{1}{(\sqrt{2\pi\sigma^2})^{\frac{mn}{2}}}\exp({-\frac{||\mathbf{s}||^2}{2\sigma^2}})}{2\Phi(\frac{t}{\sigma})-1}\cdot\mathbb{I}_{[-t,t]}(\mathbf{s}),
\end{aligned}
\end{equation}
and
\begin{equation}
\label{eq:pdfampc2}
\begin{aligned}
&p_{\mathcal{M}_{ji}(\mathbf{S}^\prime)}(\mathbf{s})\\=&\frac{\frac{1}{(\sqrt{2\pi})^{\frac{mn}{2}}|\mathbf{\Sigma}|^{\frac{1}{2}}}\exp({-\frac{1}{2}}[\mathbf{s}-\mathrm{vec}(\mathbf{W})]^T \mathbf{\Sigma}^{-1}[\mathbf{s}-\mathrm{vec}(\mathbf{W})])}{2\Phi(\frac{t}{\sigma})-1}\\&\cdot\mathbb{I}_{[-t+\Delta,t+\Delta]}(\mathbf{s})\\
=&\frac{\frac{1}{(\sqrt{2\pi\sigma^2})^{\frac{mn}{2}}}\exp({-\frac{||\mathbf{s}-\mathrm{vec}(\mathbf{W})||^2}{2\sigma^2}})}{2\Phi(\frac{t}{\sigma})-1}\cdot\mathbb{I}_{[-t+\Delta,t+\Delta]}(\mathbf{s}),
\end{aligned}
\end{equation}
where $\mathbf{\Sigma}=\mathbb{E}[\mathbf{s}\mathbf{s}^\top]$ is the covariance matrix, for $i,j\in[N]$. 
Note that since all variables are uncorrelated, then $\mathbf{\Sigma}$ is a diagonal matrix with variances of $\sigma^2$ appearing on the main diagonal and zeros elsewhere.

\noindent
For the term $||\mathbf{s}||^2$ in \eqref{eq:pdfampc1}, we have
\begin{equation}
\label{eq:vecx}
||\mathbf{s}||^2=||\sum_{l=1}^{mn}\mathbf{s}^{[l]}||^2=\sum_{l=1}^{mn}||\mathbf{s}^{[l]}||^2.
\end{equation}
For the term $||\mathbf{s}-\mathrm{vec}(\mathbf{W})||^2$ in \eqref{eq:pdfampc2}, without loss of generality, let $\mathbf{b}_1$ be the basis that is parallel to $\mathrm{vec}(\mathbf{W})$. First, we have
\begin{equation}
\begin{aligned}
\mathbf{s}-\mathrm{vec}(\mathbf{W})=&\sum_{l=1}^{mn} \mathbf{s}^{[l]}-\mathrm{vec}(\mathbf{W}) \\ =&(\mathbf{s}^{[1]}-\mathrm{vec}(\mathbf{W}))+\sum_{l=2}^{mn} \mathbf{s}^{[i]}.
\end{aligned}
\end{equation}
Note that we have
\begin{equation}
\label{eq:iffs}
\begin{aligned}
&\mathbf{b}_1 \bot \mathbf{b}_2,...,\mathbf{b}_{mn}\\
\iff &\mathbf{s}^{[1]} \bot \mathbf{s}^{[2]},...,\mathbf{s}^{[mn]}\\
\iff &(\mathbf{s}^{[1]}-\mathrm{vec}(\mathbf{W})) \bot \mathbf{s}^{[2]},...,\mathbf{s}^{[mn]}\\
\iff &(\mathbf{s}^{[1]}-\mathrm{vec}(\mathbf{W})) \bot \mathbf{s}^{[2]}+...+\mathbf{s}^{[mn]}=\sum_{l=2}^{mn}\mathbf{s}^{[l]}.
\end{aligned}
\end{equation}
Then, since \eqref{eq:iffs} implies that $(\mathbf{s}^{[1]}-\mathrm{vec}(\mathbf{W}))$ is orthogonal to $\sum_{l=1}^{mn}\mathbf{s}^{[l]}$, by the Pythagorean theorem we have
\begin{equation}
\label{eq:vecxv}
\begin{aligned}
||\mathbf{s}-\mathrm{vec}(\mathbf{W})||^2&=||(\mathbf{s}^{[1]}-\mathrm{vec}(\mathbf{W}))||^2+||\sum_{l=2}^{mn}\mathbf{s}^{[l]}||^2\\
&\overset{\mathrm{(a)}}=(\eta_1-||\mathrm{vec}(\mathbf{W})||)^2+||\sum_{l=2}^{mn}\mathbf{s}^{[l]}||^2,
\end{aligned}
\end{equation}
where $\mathrm{(a)}$ is due to
\begin{equation}
\begin{aligned}
\lVert \mathbf{s}^{[1]}-\mathrm{vec}(\mathbf{W})\rVert^2=&
\lVert \lVert \mathbf{s}^{[1]}\rVert\cdot \mathbf{b}_1-\lVert \mathrm{vec}(\mathbf{W})\rVert\cdot \mathbf{b}_1  \rVert^2\\
=&(\lVert\mathbf{s}^{[1]}\rVert-\lVert \mathrm{vec}(\mathbf{W})\rVert)^2\\
=&(\eta_1-\lVert \mathrm{vec}(\mathbf{W})\rVert)^2.
\end{aligned}
\end{equation}
Then, by combining \eqref{eq:vecx} together with \eqref{eq:vecxv}, one can obtain the absolute value of the privacy loss function as 
\begin{equation}
\label{eq:derive}
\begin{aligned}
&\lvert l_{\mathcal{M}_{ji},\mathbf{S},\mathbf{S}^\prime}(\mathbf{s})\rvert=\lvert \mathrm{ln}(\frac{p_{\mathcal{M}_{ji}(\mathbf{S})}(\mathbf{s})}{p_{\mathcal{M}_{ji}(\mathbf{S}^\prime)}(\mathbf{s})}) \rvert\\
=&\lvert \frac{1}{2\sigma^2} (\lVert \mathbf{s}\rVert^2 -\lVert \mathbf{s}-\mathrm{vec}(\mathbf{W}) \rVert^2)\rvert\cdot\mathbb{I}_{[-t+\Delta,t]}(\mathbf{s})\\
=&\lvert \frac{1}{2\sigma^2}(\sum_{l=1}^{mn} \lVert\mathbf{s}^{[l]}\rVert^2 -(\eta_1-\lVert\mathrm{vec}(\mathbf{W})\rVert)^2+\sum_{l=2}^{mn}\lVert\mathbf{s}^{[l]}\rVert^2)\rvert\\
&\cdot\mathbb{I}_{[-t+\Delta,t]}(\mathbf{s})\\
=&\lvert \frac{1}{2\sigma^2} (\lVert \mathbf{s}^{[1]}\rVert^2 -( \eta_1-\lVert\mathrm{vec}(\mathbf{W})\rVert)^2)\rvert\cdot\mathbb{I}_{[-t+\Delta,t]}(\mathbf{s})\\
=&\lvert \frac{1}{2\sigma^2} ((\eta_1)^2 -( \eta_1-\lVert\mathrm{vec}(\mathbf{W})\rVert)^2)\rvert\cdot\mathbb{I}_{[-t+\Delta,t]}(\mathbf{s})\\
\overset{\mathrm{(b)}}\leq&\lvert \frac{1}{2\sigma^2}(-2\eta_1\Delta+(\Delta)^2) \rvert\cdot\mathbb{I}_{[-t+\Delta,t]}(\mathbf{s}),
\end{aligned}
\end{equation}
where $\mathrm{(b)}$ is based on $\lVert\mathbf{W}\rVert_F=\lVert\mathrm{vec}(\mathbf{W})\rVert\leq\Delta$, for $i,j\in[N]$. Let $y=\eta_1$ and recall that 
\begin{equation}
\label{eq:plf-truncated_scalar}
\begin{aligned}
&|l_{\mathcal{M}_{ji},d,d^\prime}(y)|=|\mathrm{ln}(\frac{p_{\mathcal{M}_{ji}(d)}(y)}{p_{\mathcal{M}_{ji}(d^\prime)}(y)})|\\
=&|\frac{1}{2\sigma^2}(-2y\Delta+(\Delta)^2)|\cdot\mathbb{I}_{[-t+\Delta,t]}(y),
\end{aligned}
\end{equation}
for $i,j\in[N]$. Based on \eqref{eq:derive} and \eqref{eq:plf-truncated_scalar}, one can write:
\begin{equation}
\label{eq:thmloss}
\begin{aligned}
\lvert l_{\mathcal{M}_{ji},\mathbf{S},\mathbf{S}^\prime}(\mathbf{s})\rvert
\leq & \lvert \frac{1}{2\sigma^2}(-2y\Delta+(\Delta)^2) \rvert\cdot\mathbb{I}_{[-t+\Delta,t]}(\mathbf{s})\\
=& [\underbrace{|l_{\mathcal{M}_{ji},d,d^\prime}(y)|,...,|l_{\mathcal{M}_{ji},d,d^\prime}(y)|}_{mn\text{'s}}],
\end{aligned}
\end{equation}
for $i,j\in[N]$. 

To ensure that the protocol is $(\epsilon,\delta)$-locally differential private, we require the absolute value of privacy loss functions as $|l_{\mathcal{M}_{ji},d,d^\prime}(y)|$'s in \eqref{eq:thmloss} to be as follows:
\begin{equation}
\label{eq:plf-trun}
\begin{aligned}
|l_{\mathcal{M}_{ji},d,d^\prime}(y)|=|\frac{1}{2\sigma^2}(-2y\Delta+(\Delta)^2)|\cdot\mathbb{I}_{[-t+\Delta,t]}(y)\leq\epsilon,
\end{aligned}
\end{equation}
for $i,j\in[N]$.
Thus, the region that cannot guarantee $(\epsilon,\delta)$-local differential privacy is 
\begin{equation*}
|\frac{1}{2\sigma^2}(-2y\Delta+(\Delta)^2)|\cdot\mathbb{I}_{[-t+\Delta,t]}(y)>\epsilon.
\end{equation*}
Therefore, we have
\begin{equation}
\label{eq:bound1}
-t\leq y<-\frac{\sigma^2\epsilon}{\Delta}+\frac{\Delta}{2},
\end{equation}
and
\begin{equation}
\label{eq:bound2}
\frac{\sigma^2\epsilon}{\Delta}+\frac{\Delta}{2}<y\leq t.
\end{equation}
From \eqref{eq:bound1} and \eqref{eq:bound2}, we must have $-t<-\frac{\sigma^2\epsilon}{\Delta}+\frac{\Delta}{2}$ and $\frac{\sigma^2\epsilon}{\Delta}+\frac{\Delta}{2}< t$. Thus, we obtain $\sigma^2<\frac{t\cdot\Delta}{\epsilon}+\frac{(\Delta)^2}{2\epsilon}$ and $\sigma^2<\frac{t\cdot\Delta}{\epsilon}-\frac{(\Delta)^2}{2\epsilon}$. Subsequently, by intersecting the inequalities and noting  that $\sigma > 0$, we obtain
\begin{equation}
\label{eq:sigmarange}
\sigma\in(0,\sqrt{\frac{t\cdot\Delta}{\epsilon}-\frac{(\Delta)^2}{2\epsilon}}).
\end{equation}
By the definition of $(\epsilon,\delta)$-local differential privacy, we know that \eqref{eq:bound1} and \eqref{eq:bound2} are the regions we cannot guarantee $(\epsilon,\delta)$-local differential privacy. 
Thus, for $Y\sim\mathcal{TN}(0,\sigma^2;[-t,t])$, we must have
\begin{equation}
\label{eq:blue}
0\leq\mathbb{P}(-t \leq Y < -\frac{\sigma^2\epsilon}{\Delta}+\frac{\Delta}{2})+\mathbb{P}(\frac{\sigma^2\epsilon}{\Delta}+\frac{\Delta}{2} < Y \leq t)\leq \delta,
\end{equation}
where
\begin{equation*}
\label{eq:LDPthm-derive}
\begin{aligned}
&\mathbb{P}(-t \leq Y < -\frac{\sigma^2\epsilon}{\Delta}+\frac{\Delta}{2})+\mathbb{P}(\frac{\sigma^2\epsilon}{\Delta}+\frac{\Delta}{2} < Y \leq t)\\
=&\frac{\Phi(\frac{-\textstyle\frac{\sigma^2\epsilon}{\Delta}+\frac{\Delta}{2}}{\sigma})-\Phi(-\frac{t}{\sigma})}{2\Phi(\frac{t}{\sigma})-1}+\frac{\Phi(\frac{t}{\sigma})-\Phi(\frac{-\textstyle\frac{\sigma^2\epsilon}{\Delta}+\frac{\Delta}{2}}{\sigma})}{2\Phi(\frac{t}{\sigma})-1}\\
=&1-\frac{\Phi(\frac{\sigma\epsilon}{\Delta}+\frac{\Delta}{2\sigma})-\Phi(-\frac{\sigma\epsilon}{\Delta}+\frac{\Delta}{2\sigma})}{2\Phi(\frac{t}{\sigma})-1},
\end{aligned}
\end{equation*}
which completes the proof.
\end{proof}

\subsection{Proof of Lemma~\ref{lemma:decrease}}
\label{app:decrease}
\begin{proof}
Let 
\begin{equation}
\label{eq:g}
g(\alpha)=\Phi(\sqrt{\frac{\epsilon}{2}}(\alpha+\frac{1}{\alpha}))-\Phi(\sqrt{\frac{\epsilon}{2}}(-\alpha+\frac{1}{\alpha})),
\end{equation} 
and 
\begin{equation}
\label{eq:h}
h(\alpha)=2\Phi(\frac{t\sqrt{2\epsilon}}{\alpha\cdot \Delta})-1. 
\end{equation}
Thus, $B(\alpha)$ is simplified as
\begin{equation}
\label{eq:simpleB}
B(\alpha)=1-\frac{g(\alpha)}{h(\alpha)},
\end{equation}
and its derivative is given by
\begin{equation}
\label{eq:Bdiff}
B^\prime(\alpha)=-\frac{g^\prime(\alpha)h(\alpha)-g(\alpha)h^\prime(\alpha)}{[h(\alpha)]^2},
\end{equation}
where
\begin{equation}
\label{eq:gdiff}
\begin{aligned}
\textstyle g^\prime(\alpha)&=\frac{\sqrt{{\epsilon}}}{2\sqrt{ \pi}} \exp \left(-\frac{\epsilon}{4}\left(\alpha+\frac{1}{\alpha}\right)^{2}\right)\cdot(1-\frac{1}{\alpha^{2}})\\
&-\frac{\sqrt{{\epsilon}}}{2\sqrt{ \pi}} \exp \left(-\frac{\epsilon}{4}\left(\alpha-\frac{1}{\alpha}\right)^{2} \right)\cdot(-1-\frac{1}{\alpha^{2}}),
\end{aligned}
\end{equation}
and
\begin{equation}
\label{eq:hdiff}
h^\prime(\alpha)=-\sqrt{\frac{2}{\pi}} \exp \left(-\frac{1}{2}\left(\frac{t \sqrt{2 \epsilon}}{\alpha \cdot\Delta}\right)^{2}\right)\left(\frac{t \sqrt{2 \epsilon}}{\alpha^{2} \cdot\Delta}\right).
\end{equation}
We start by observing the numerator in \eqref{eq:Bdiff}. In \eqref{eq:g}, since $\sqrt{\frac{\epsilon}{2}}(\alpha+\frac{1}{\alpha})>\sqrt{\frac{\epsilon}{2}}(-\alpha+\frac{1}{\alpha})$, then $\Phi(\sqrt{\frac{\epsilon}{2}}(\alpha+\frac{1}{\alpha}))>\Phi(\sqrt{\frac{\epsilon}{2}}(-\alpha+\frac{1}{\alpha}))$, resulting in 
\begin{equation}
\label{eq:gresult}
g(\alpha)>0.
\end{equation}
In \eqref{eq:h}, because $\alpha>0$, then $\frac{t\sqrt{2\epsilon}}{\alpha\cdot\Delta}>0$ and, hence, $\Phi(\frac{t\sqrt{2\epsilon}}{\alpha\cdot \Delta})>\frac{1}{2}$. Thus, one can write
\begin{equation}
\label{eq:hresult}
h(\alpha)>0.
\end{equation}
To simplify \eqref{eq:gdiff}, we have
\begin{equation}
\label{eq:gdiffresult}
\begin{aligned}
g^\prime(\alpha)=&\frac{\sqrt{{\epsilon}}}{2\sqrt{ \pi}} e^{-\frac{\epsilon}{4}(\alpha^2+\frac{1}{\alpha^2})-\frac{\epsilon}{2}}\cdot[(1-\frac{1}{\alpha^{2}})+e^{\epsilon}(1+\frac{1}{\alpha^{2}})]\\
>&\frac{\sqrt{{\epsilon}}}{2\sqrt{ \pi}} e^{-\frac{\epsilon}{4}(\alpha^2+\frac{1}{\alpha^2})-\frac{\epsilon}{2}}\cdot[(1-\frac{1}{\alpha^{2}})+e^{0}(1+\frac{1}{\alpha^{2}})]\\
=&\sqrt{\frac{\epsilon}{\pi}}e^{-\frac{\epsilon}{4}(\alpha^2+\frac{1}{\alpha^2})-\frac{1}{2}}>0.
\end{aligned}
\end{equation}
The equation \eqref{eq:hdiff} implies 
\begin{equation}
\label{eq:hdiffresult}
h^\prime(\alpha)<0.
\end{equation}
Combining \eqref{eq:gresult}, \eqref{eq:hresult}, \eqref{eq:gdiffresult}, and \eqref{eq:hdiffresult}, the numerator in \eqref{eq:Bdiff} results in $g^\prime(\alpha)h(\alpha)-g(\alpha)h^\prime(\alpha)>0$. Thus, we have
\begin{equation}
\label{eq:Bdifffrac}
\frac{g^\prime(\alpha)h(\alpha)-g(\alpha)h^\prime(\alpha)}{[h(\alpha)]^2}>0.
\end{equation}
The denominator in \eqref{eq:Bdiff} is 
\begin{equation}
\label{eq:BdiffDe}
[h(\alpha)]^2>0.
\end{equation}
Substituting \eqref{eq:Bdifffrac} into \eqref{eq:Bdiff}, yields
\begin{equation}
B^\prime(\alpha)=-\frac{g^\prime(\alpha)h(\alpha)-g(\alpha)h^\prime(\alpha)}{[h(\alpha)]^2}<0,
\end{equation}
which proves that $B(\alpha)$ is a monotonic decreasing function for $\alpha \in (0,\sqrt{\frac{2t}{\Delta}-1}]$.
\end{proof}

\subsection{Proof of Theorem~\ref{thm:submatrices}}
\label{app:submatrices}
\begin{proof}
Recall that $\lbrack \mathbf{X}_j\rbrack_i=\mathbf{X}_j+\tilde{\mathbf{N}}_{ji}$, where  $\tilde{\mathbf{N}}_{ji}=\sum_{k=1}^T \omega_i^k\mathbf{N}_{j,k}$ with all entries in $\mathbf{N}_{j,k}$'s are randomly distributed according to  $\mathcal{N}(0,\sigma^2_s)$, for $i,j\in[N]$. We resample the entire $\tilde{\mathbf{N}}_{ji}=\sum_{k=1}^T \omega_i^k\mathbf{N}_{j,k}$ by randomly generating $\mathbf{N}_{j,k}$'s until the result is within the range of $[-t,t]$, for $i,j\in[N]$. By the property of linear combination of Gaussian distributions, the combined noises at each entry are in a distribution of $\mathcal{N}(0,\sum_{k=1}^T \lvert\omega_i^k\rvert^2\sigma_s^2)$, for $i\in[N]$. Since we truncate the distribution to $[-t,t]$, which obtains a distribution of $\mathcal{TN}(0,\sum_{k=1}^T \lvert\omega_i^k\rvert^2\sigma_s^2;[-t,t])$, for $i\in[N]$. We have
\begin{equation}
\sum_{k=1}^T \lvert\omega_i^k\rvert^2\sigma_s^2=(\sum_{k=1}^T \lvert\omega_i^k\rvert^2)\cdot\sigma_s^2=\sigma^2,
\end{equation}
where $\sigma=\frac{\alpha^*\cdot\Delta}{\sqrt{2\epsilon}}$, for $i\in[N]$. Therefore, one can characterize  the variance of the noise generated for each entry in the noise matrices $\mathbf{N}_{j,1},\dots,\mathbf{N}_{j,T}$ as 
\begin{equation}
\sigma_s^2=\frac{\sigma^2}{\sum_{k=1}^T \lvert\omega_i^k\rvert^2}=\frac{(\frac{\alpha^*\cdot\Delta}{\sqrt{2\epsilon}})^2}{T}=\frac{(\alpha^*)^2 \cdot \Delta^2}{2\epsilon T},
\end{equation}
where $\lvert\omega_i^k\rvert=1$, for $i,j\in[N]$.
\end{proof}

\subsection{Description of Algorithm~\ref{algo:privatemul}}
\label{app:privatemul}
In Algorithm \ref{algo:privatemul}, Step $1$ to Step $9$ compute the secret shares of $[[\mathbf{U}]_i\times[\mathbf{V}]_j]_k$ for client $k$, for $i,j,k\in[N]$. 
In order to obtain $[\mathbf{UV}]_k$, client $k$ locally computes
\begin{equation}
\begin{aligned}
&\frac{1}{N^2}\sum_{i=1}^{N}\sum_{j=1}^{N}[[\mathbf{U}]_i\times[\mathbf{V}]_j]_k
=
[\frac{1}{N^2}\sum_{i=1}^{N}\sum_{j=1}^{N}[\mathbf{U}]_i\times[\mathbf{V}]_j]_k\\
= & 
[\frac{1}{N}\sum_{j=1}^{N}(\frac{1}{N}\sum_{i=1}^{N}[\mathbf{U}]_i)\times[\mathbf{V}]_j]_k
= 
[\frac{1}{N}\sum_{j=1}^{N}\mathbf{U}\times[\mathbf{V}]_j]_k\\
= &
[\mathbf{U}\times\frac{1}{N}\sum_{j=1}^{N}[\mathbf{V}]_j]_k
=[\mathbf{U}\mathbf{V}]_k,
\end{aligned}
\end{equation}
where $\frac{1}{N}\sum_{i=1}^{N}[\mathbf{U}]_i=\mathbf{U}$ and $\frac{1}{N}\sum_{j=1}^{N}[\mathbf{V}]_j=\mathbf{V}$, for $k\in[N]$. 
Thus, at the end of Algorithm \ref{algo:privatemul}, each client holds a secret share of $\mathbf{U}\mathbf{V}$ as $[\mathbf{U}\mathbf{V}]_k$, for $k\in[N]$.

\bibliographystyle{IEEEtran}
\bibliography{Paper} 

\end{document}